\theoremstyle{plain}
\newtheorem{theorem}{Theorem}
\newtheorem{lemma}[theorem]{Lemma}
\newtheorem{observation}[theorem]{Observation}
\newtheorem{conjecture}[theorem]{Conjecture}
\newcommand{\edgeset}{\mathcal{E}}
\title{Snipperclips: Cutting Tools into Desired Polygons using Themselves\thanks{An extended abstract of this paper appeared in the proceedings of the 29th Canadian Conference on Computational Geometry (CCCG 2017)~\cite{SnipperclipsCCCG}. 
M.~C. was supported by ERC StG 757609.
M.~K. was partially supported by MEXT KAKENHI Nos.~12H00855, and 17K12635. M.-K.~C., M.~R. and A.~v.~R. were supported by JST ERATO Grant Number JPMJER1201, Japan.}}
\author{
Zachary Abel\thanks{Massachusetts Institute of Technology, Cambridge, Massachusetts, USA, \protect\url{{zabel,edemaine,mdemaine,achester,}@mit.edu}}
\and Hugo Akitaya\thanks{University of Massachusetts Lowell, USA, \protect\url{hugo_akitaya@uml.edu}}
\and Man-Kwun Chiu\thanks{Institut f\"ur Informatik, Freie Universit\"at Berlin,
\protect\url{chiumk@zedat.fu-berlin.de}}
\and Erik D. Demaine\footnotemark[2]
\and Martin L. Demaine\footnotemark[2]
\and Adam Hesterberg\footnotemark[2]
\and Matias Korman\thanks{Siemens EDA (formerly Mentor Graphics), OR, USA. \protect\url{matias_korman@mentor.com}.}
\and Jayson Lynch\thanks{University of Waterloo, Ontario, Canada.\protect\url{jayson.lynch@uwaterloo.ca}}
\and Andr\'e van Renssen\thanks{University of Sydney, Sydney, Australia, \protect\url{andre.vanrenssen@sydney.edu.au}}
\and Marcel Roeloffzen\thanks{TU Eindhoven, Eindhoven, the Netherlands, \protect\url{m.j.m.roeloffzen@tue.nl}}
}
\date{}
\begin{document}

\maketitle

\begin{abstract}
We study \emph{Snipperclips}, a computer puzzle game whose objective is to create a target shape with two tools. The tools start as constant-complexity shapes, and each tool can snip (i.e., subtract its current shape from) the other tool. We study the computational problem of, given a target shape represented by a polygonal domain of $n$ vertices, is it possible to create it as one of the tools' shape via a sequence of snip operations? If so, how many snip operations are required? We consider several variants of the problem (such as allowing the tools to be disconnected and/or using an undo operation) and bound the number of operations needed for each of the variants.
\end{abstract}

\section{Introduction}
\emph{Snipperclips: Cut It Out, Together!}\ \cite{Snipperclips-wiki}
is a puzzle game developed by SFB Games and published by Nintendo worldwide on March 3, 2017 for their new console, Nintendo Switch. In the game, up to four players cooperate to solve puzzles.
Each player controls a character%
\footnote{The game in fact allows one human to control up to two characters,
  with a button to switch between which character is being controlled.}
whose shape starts as a rectangle in which two corners have been
rounded so that one short side becomes a semicircle.  The main
mechanic of the game is \emph{snipping}: when two such characters
partially overlap, one character can \emph{snip} the other character,
i.e., subtract the current shape of the first character from the
current shape of the latter character; see Figure~\ref{fig:game} (top middle) where the yellow character snips the red character subtracting from it their intersection (which is shown in green).  
In addition, a \emph{reset} operation allows a character to restore its
original shape. Finally, an \emph{undo} operation allows a character
to restore its shape to what it was before the prior snip or reset
operation. A more formal definition of these operations follows in the next section. An unreleased 2015 version of this game,
\emph{Friendshapes} by SFB Games, had the same mechanics, but
supported only up to two players \cite{Friendshapes}. 

\begin{figure}[!b]
\centering
\includegraphics[width=0.32\linewidth]{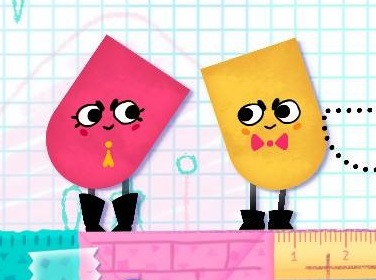}\hfill
\includegraphics[width=0.32\linewidth]{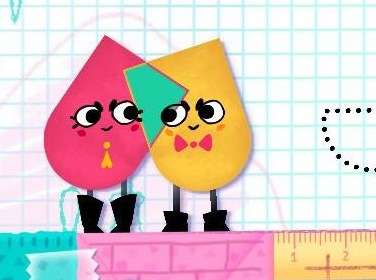}\hfill
\includegraphics[width=0.32\linewidth]{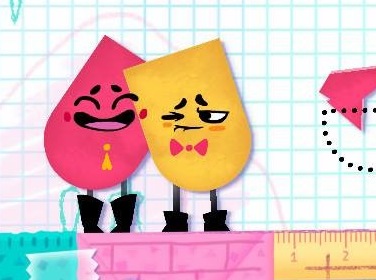}

\smallskip

\includegraphics[width=0.32\linewidth]{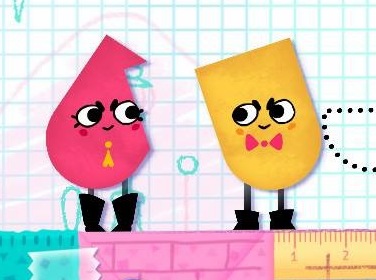}\hfill
\includegraphics[width=0.32\linewidth]{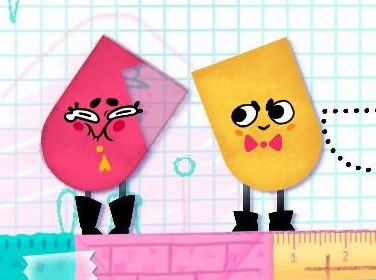}\hfill
\includegraphics[width=0.32\linewidth]{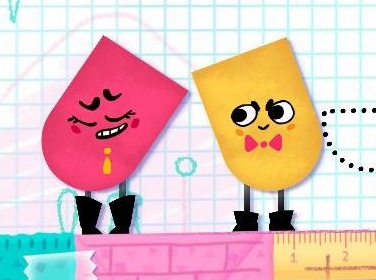}

\smallskip

\includegraphics[width=0.49\linewidth]{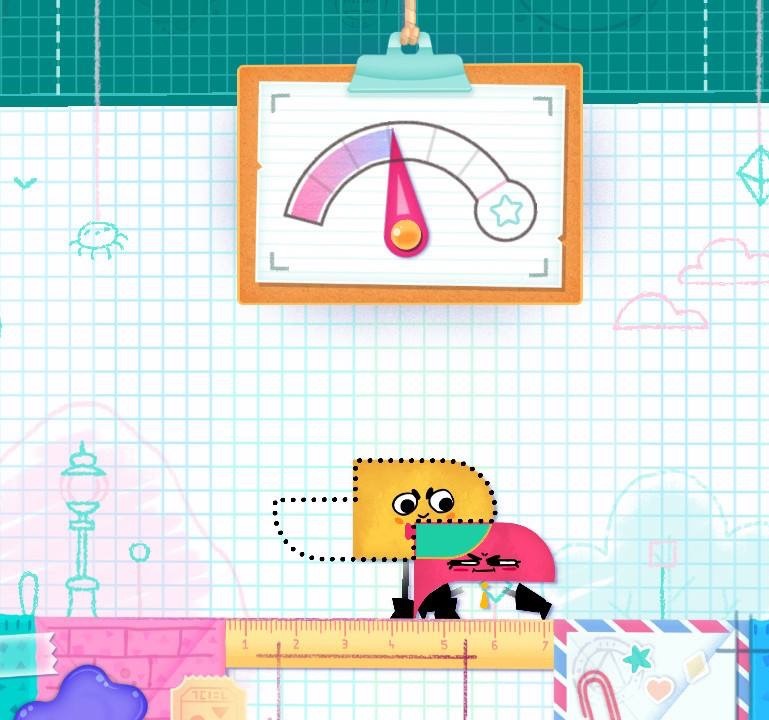}\hfill
\includegraphics[width=0.49\linewidth]{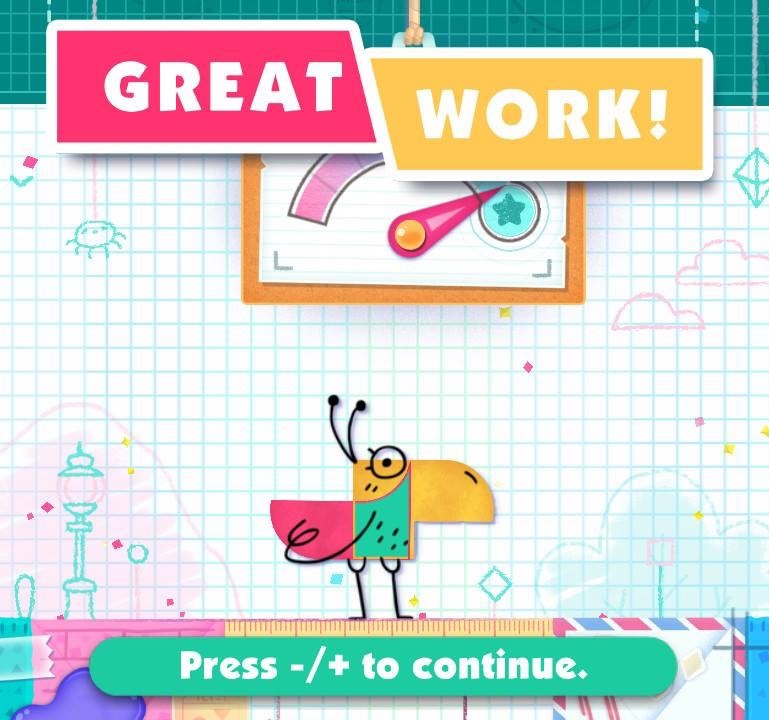}

\caption{Cropped screenshots of Snipperclips: snipping, resetting,
  and solving a Shape Match puzzle.  Sprites copyright SFB/Nintendo
  and included here under Fair Use.}
\label{fig:game}
\end{figure}

Puzzles in Snipperclips have varying goals, but an omnipresent subgoal is to
form one or more players into desired shape(s), so that they can carry out
required actions.  In particular, a core puzzle type (``Shape Match'')
has one target shape which must be (approximately) formed by the union of the
characters' shapes.  
In this paper, we study when
this goal is attainable, and when it is, analyze the minimum number of
operations required.

\section{Problem definition and results}
For the remainder of the paper we consider the case of exactly two characters
or \emph{tools} $\mathcal{T}_1$ and $\mathcal{T}_2$.  For geometric simplicity,
we assume that the initial shape of both tools is a unit square.
Most of the results in this paper work for nice (in particular, fat)
constant-complexity initial shapes, such as the rounded rectangle in
Snipperclips, but would result in a more involved description.

We view each tool as an open set of points that can be rotated and translated freely.%
\footnote{In the actual game, the tools' translations are limited by gravity, jumping, crouching, stretching, standing on each other, etc., though in practice this is not a huge limitation.  Rotation is indeed arbitrary.}
After any rigid transformation, if the two tools have nonempty intersection, we can \emph{snip} (or \emph{cut}) one of them, i.e., remove from one of the tools the closure of the intersection of the two tools (or equivalently, the closure of the other tool, see Figure~\ref{fig:snip}). Note that by removing the closure we preserve the invariant that both tools remain open sets. In addition to the snip operation, we can \emph{reset} a tool, which returns it back to its original unit-square shape.

\begin{figure}[h]
\centering
\includegraphics[width=\linewidth]{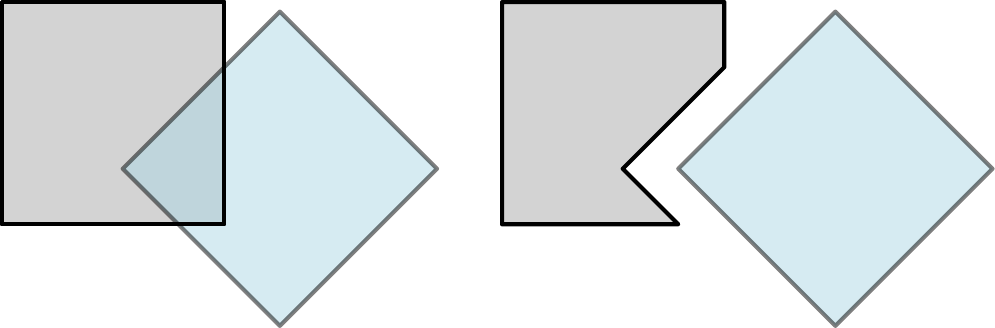}
\caption{By translating and rotating the two tools we can make them partially overlap (left figure). On the right we see the resulting shape of both tools after the snip operation.}
\label{fig:snip}
\end{figure}

After a snip operation, the changed tool could become disconnected. There are two natural variants on the problem of how to deal with disconnection. In the \emph{connected model}, we force each tool to be a single connected component. Thus, if the snip operation disconnects a tool, the user can choose which component to use as the new tool.
In the \emph{disconnected model}, we allow the tool to become disconnected, viewing a tool as a set of points to which we apply rigid transformations and the snip/reset operation.
The Snipperclips game by Nintendo follows the disconnected model, but we find the connected model an interesting alternative to consider.

\label{undo/redo}
The actual game has an additional
\emph{undo/redo} operation, allowing each tool to return into its previous
shape. For example, a heavily cut tool can reset to the square, cut something in the other tool, and use the undo operation to return to its previous cut shape. The game has an undo stack of size~$1$; we consider a more general case in which the stack could have size $0$, $1$ or $2$.

\subsection{Results}
\begin{table}[tb]
\begin{center}
\begin{tabular}{ |c|cc||cc| }
 \hline
 & \multicolumn{2}{c||}{Connected Model} & \multicolumn{2}{|c|}{Disconnected Model} \\ \hline
 Undo stack size & 1 shape & 2 shapes & 1 shape & 2 shapes\\
 \hline
 0 & $O(n)$ & No & $O(n^2)$ & No\\
 1 & $O(n)$ & $O(n+m)$ & $O(n)$ & Yes \\
 2 & $O(n)$ & $O(n+m)$ & $O(n)$ & $O(n+m)$ \\
 \hline
\end{tabular}
\end{center}
\caption{Number of operations required to carve out the target shapes of $n$ and $m$ vertices, respectively. A cell entry with ``No" means that it is not always possible to do whereas ``Yes" means it is possible (but the number of operations needed is not bounded by any function of $n$ or $m$).}
\label{tb:undo}
\end{table}
Given two target shapes $P_1$ and $P_2$, we would like to find a sequence of snip/reset operations that transform tool $\mathcal{T}_1$ into $P_1$ and at the same time transform $\mathcal{T}_2$ into~$P_2$. Because our initial shape is polygonal, and we allow only finitely many snips, the target shapes $P_1$ and $P_2$ must be polygonal domains of $n$ and $m$ vertices, respectively. Whenever possible, our aim is to transform the tools into the desired shapes using as few snip and reset operations as possible. Specifically, our aim is for the number of snip and reset operations to depend only on~$n$ and~$m$ (and not depend on other parameters such as the feature size of the target shape).

In Section~\ref{sec_neg}, we prove some lower bound results.
First we show in Section~\ref{sec_lb_impossibility} that, without an undo operation,
it is not always possible to cut both tools into the desired shape,
even when $P_1 = P_2$.
Then we show lower bounds on the number of snips/undo/redo/reset operations
required to make a single target shape $P_1$.
For the connected model, Section~\ref{sec_lb_connected}
proves an easy $\Omega(n)$ lower bound.
For the disconnected model, Section~\ref{sec_lb_disconnected}
gives a family of shapes that need $\Omega(n)$ operations to carve in a natural 1D model,
and gives a lower bound of $\Omega(\log n)$ for all shapes in the 2D model.

On the positive side, we first consider the problem without the undo operation
in Section~\ref{sec_noundo}.
We give linear and quadratic constructive algorithms to carve a single shape $P_1$ in both the connected and disconnected models, respectively.

In Section~\ref{sec_undo} we introduce the undo operation. We first show that even a stack of one undo allows us to cut both tools into the target shapes, although the number of snip operations is unbounded if we use the disconnected model. We then show that by increasing the undo stack size, we can reduce the number of operations needed to linear. A summarizing table of the number of snips needed depending on the model is shown in Table~\ref{tb:undo}.

\subsection{Related Work}

Computational geometry has considered a variety of problems related to
cutting out a desired shape using a tool such as
circular saw \cite{CircularSaw}, hot wire \cite{Jaromczyk-Kowaluk-2000},
and glass cutting \cite{Overmars-Welzl-1985,Pach-Tardos-2000}.
The Snipperclips model is unusual in that the tools are themselves the
material manipulated by the tools.  This type of model arises in real-world
manufacturing, for example, when using physical objects to guide the
cutting/stamping of other objects---a feature supported by the popular new
\emph{Glowforge} laser cutter \cite{Glowforge} via a camera system.

Our problem can also be seen as finding the optimal Constructive Solid
Geometry (CSG)~\cite{foley1996computer} expression tree, where leaves represent base shapes
(in our model, rectangles), internal nodes represent shape subtraction, and
the root should evaluate to the target shape,
such that the tree can be evaluated using only two registers.
Applegate et al.~\cite{Applegate-2007} studied
a rectilinear version of this problem (with union and subtraction, and a
different register limitation).

\section{Lower Bounds}\label{sec_neg}
In this section, we first prove that some pairs of target shapes cannot be realized
in both tools simultaneously, using only snip and reset operations.
Then we focus on achieving only one target shape.
In the connected model, we give a linear lower bound (with respect to the number $n$ of vertices of the target shape) on the number of operations to construct the target shape.
In the disconnected model, we give a logarithmic lower bound, and give a linear lower
bound in a natural 1D version of Snipperclips.

\subsection{Impossibility} \label{sec_lb_impossibility}
We begin with the intuitive observation that not all combinations of target shapes can be constructed when restricted to the snip and reset operations.

\begin{observation}\label{obs_notpos}
In both the connected and disconnected models,
there is a target shape that cannot be realized by both tools at the same time using only snip and reset operations.
\end{observation}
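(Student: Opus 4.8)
The plan is to exhibit a single shape $P$ that each tool can produce on its own, yet that the two tools can never display simultaneously; the entire difficulty is an ordering (``who finishes last'') argument rather than any geometric construction. Concretely, I would let $P$ be the unit square with one corner clipped off by a straight line, so $P$ is a pentagon strictly contained in and different from the unit square. This $P$ is realizable by a single tool in one snip: starting from the two unit squares, place $\mathcal{T}_2$ so that one of its edges slices across a corner of $\mathcal{T}_1$ while covering that corner, then snip $\mathcal{T}_1$; the corner triangle is removed and $\mathcal{T}_1 = P$. The symmetric move realizes $P$ in $\mathcal{T}_2$. So individual realizability is never in doubt, and the claim is purely about simultaneity.

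The key structural fact I would isolate first is \emph{monotonicity of snipping}: a snip only removes points from the tool being cut, so the resulting shape is a subset of the previous one, $\mathcal{T}^{\mathrm{new}} \subseteq \mathcal{T}^{\mathrm{old}}$; a reset, by contrast, replaces a tool by the full unit square. Neither statement uses connectivity, so the argument will apply verbatim in both the connected and the disconnected model, yielding the observation in both at once.

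With these in hand, I would suppose for contradiction that some finite sequence of snip/reset operations ends with both tools equal to $P$, and take a shortest such sequence. By minimality its last operation $\sigma$ actually changes the configuration, hence changes exactly one tool, say $\mathcal{T}_1$, turning it into $P$; since $\sigma$ does not touch $\mathcal{T}_2$, the tool $\mathcal{T}_2$ already equals its final value $P$ immediately before $\sigma$. Now I split on the type of $\sigma$. If $\sigma$ is a reset, then afterwards $\mathcal{T}_1$ is the unit square, forcing $P$ to equal the unit square, which it does not. If $\sigma$ is a snip of $\mathcal{T}_1$ with cutter $\mathcal{T}_2 = P$, then by monotonicity $P = \mathcal{T}_1^{\mathrm{new}} \subseteq \mathcal{T}_1^{\mathrm{old}}$, so the removed region is $\overline{\mathcal{T}_1^{\mathrm{old}} \cap \mathcal{T}_2} = \overline{P} \supseteq P$; hence $\mathcal{T}_1^{\mathrm{new}}$ contains no point of $P$, contradicting $\mathcal{T}_1^{\mathrm{new}} = P \neq \emptyset$. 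Either way we reach a contradiction, so no such sequence exists.

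The step I expect to be the main obstacle is making the snip case rigorous at the level of open sets: I must verify carefully that removing the \emph{closure} of the intersection indeed deletes all of $P$ from $\mathcal{T}_1^{\mathrm{old}}$, using $P \subseteq \overline{P} = \overline{\mathcal{T}_1^{\mathrm{old}} \cap \mathcal{T}_2}$, and that the minimality argument correctly excludes degenerate no-op snips (where the two tools are disjoint) so that the phrase ``the last operation changes exactly one tool'' is justified. Once monotonicity and this open-set bookkeeping are nailed down, the case analysis is immediate and, crucially, independent of the model.
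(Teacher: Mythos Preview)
Your snip-case analysis has a fatal gap: it silently assumes the cutter $\mathcal{T}_2$ sits in the \emph{same position} as the target copy of $P$, but the model permits an arbitrary rigid motion before every snip. So in the last operation the removed set is $\overline{\mathcal{T}_1^{\mathrm{old}}\cap \rho(P)}$ for some rigid motion $\rho$, and there is no reason for $\mathcal{T}_1^{\mathrm{old}}\cap \rho(P)$ to equal $P$. Worse, your proposed pentagon is \emph{not} a counterexample at all: both tools can be made into it with two snips. First use an edge of $\mathcal{T}_2=U$ along the clipping line to remove the corner triangle from $\mathcal{T}_1$, yielding $\mathcal{T}_1=P$. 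Now $P$ still has a full unit-length straight edge; place that edge of $\mathcal{T}_1=P$ along the same clipping line (on the triangle side) and snip $\mathcal{T}_2$, yielding $\mathcal{T}_2=P$ as well. Thus a pure ordering argument, with no geometric hypothesis on $P$, cannot succeed; the ``who finishes last'' intuition breaks precisely because the already-finished tool can be moved to act as a perfectly good cutter.

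The paper's shape is chosen so that this cannot happen: $P$ is the unit square with a very thin interior rectangle removed (a thick ``U''). The argument then looks at the moment just after the \emph{last reset} rather than the last operation. At that moment one tool is the full square, and thereafter only snips occur, so by monotonicity both tools remain supersets of $P$ for the rest of the process. Consequently every cutter used from then on lies between $P$ and the unit square, i.e., it is a unit square with at most a thin slot missing. Such a shape has no thin protrusion, so under \emph{any} rigid motion it cannot reach into the deep narrow slot of the other tool without also deleting points of $P$. That geometric obstruction --- absent for your pentagon, whose missing corner is reachable from outside by a straight edge --- is what actually drives the impossibility.
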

\begin{proof}
Consider the target shape shown in Figure~\ref{fig:impos}: a unit square in which we have removed a very thin rectangle, creating a sort of thick ``U''. First observe that, if we perform no resets, neither tool has space to spare to construct a thin auxiliary shape to carve out the rectangular gap of the other tool. Thus, after we have completed carving one tool, the other one would need to reset. This implies that we cannot have the target shape in both tools at the same time.

\begin{figure}[ht]
\centering
\includegraphics{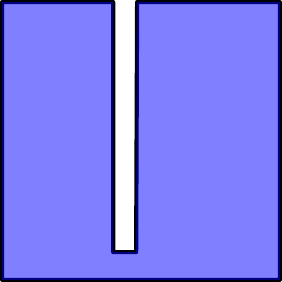}
\caption{A target shape that cannot be realized by both tools at the same time.}
\label{fig:impos}
\end{figure}

Now assume that we can transform both tools into the target shape by performing a sequence of snips and resets. Consider the state of the tools just after the last reset operation. One of the two shapes is the unit square and thus we still need to remove the thin hole using the other shape. However, because no more resets are executed, the other tool is currently and must remain a superset of the target shape.
In particular, it can differ from the square only in the thin hole, so it does not have any thin portions that can carve out the hole of the other tool.

Because the above argument is based solely on the shape of the figure, it holds in both the connected and disconnected model.
\end{proof}

\subsection{Connected Model} \label{sec_lb_connected}

Next it is easy to see that in the connected model a target shape with $\Theta(n)$ holes requires $\Omega(n)$ operations.

\begin{theorem}\label{thm_connected}
	There are target shapes that require $\Omega(n)$ operations (snip, reset, undo and redo) to construct in the connected model.
\end{theorem}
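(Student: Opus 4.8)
The plan is to exhibit one target shape $P_1$ with $\Theta(n)$ holes and show that any operation sequence producing it in the connected model has length $\Omega(n)$. Take $P_1$ to be the unit square out of which $\Theta(n)$ pairwise well-separated tiny square holes have been removed; since each hole contributes $O(1)$ vertices, $P_1$ has $n=\Theta(k)$ vertices, where $k$ denotes the number of holes. For a connected planar region $S$, let $h(S)$ be its number of holes (bounded connected components of its complement), i.e.\ its first Betti number. At the end we need $h(\mathcal{T}_1)=k=\Theta(n)$, so it suffices to show that $h$ of a tool can grow only by a constant per operation.

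The heart of the proof is the following \emph{Hole Lemma}: in the connected model, a single snip increases $h$ of the affected tool by at most a constant $c$, \emph{independent of the other tool's complexity}. To prove it I would examine $R=\mathcal{T}_1\setminus\overline{\mathcal{T}_2}$ under the assumption that $\overline{\mathcal{T}_2}$ is connected, together with the component $C$ of $R$ that is kept. The crucial observation is that the holes of $\mathcal{T}_2$ do not become holes of $C$: a hole of $\mathcal{T}_2$ lying inside $\mathcal{T}_1$ survives the cut but is fully surrounded by removed material, so it forms a \emph{separate} component of $R$ (a little island), never a hole of $C$; a hole of $\mathcal{T}_2$ not contained in $\mathcal{T}_1$ simply vanishes. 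Because $\overline{\mathcal{T}_2}$ is connected, the portion of it removed from $\mathcal{T}_1$ is essentially connected, so it can enclose only a bounded number of new regions; combined with the holes $\mathcal{T}_1$ already had, this gives $h(C)\le h(\mathcal{T}_1)+c$.

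Granting the Hole Lemma, the theorem follows by a simple potential argument. A reset returns a tool to the square, so $h=0$; an undo or a redo merely copies a shape between a tool and the stack, creating no new holes; and a snip adds at most $c$ holes to one tool. Hence, by induction on the number of operations, after $t$ operations the value of $h$ of each tool and of every shape currently on the undo/redo stack(s) is at most $c$ times the number of snips performed among the first $t$ operations. Finishing with $\mathcal{T}_1=P_1$ therefore requires at least $k/c=\Omega(n)$ snips, and in particular $\Omega(n)$ operations in total; this holds for an undo stack of size $0$, $1$, or $2$ alike.

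The step I expect to be the real obstacle is making the Hole Lemma rigorous, because the tools are open sets and one must control: (i)~the passage from $\mathcal{T}_2$ to $\overline{\mathcal{T}_2}$, which can manufacture extra holes at pinch (self-touching) points; (ii)~removed regions glued to $\partial\mathcal{T}_1$, which normally create no hole but can in contrived near-tangent configurations; and (iii)~several such features interacting. I expect the cleanest route is an Euler-characteristic computation relating $\chi(\mathcal{T}_1)$, $\chi(\overline{\mathcal{T}_2})$, $\chi(\mathcal{T}_1\cap\overline{\mathcal{T}_2})$, and $\chi(\mathcal{T}_1\cup\overline{\mathcal{T}_2})$ (or Alexander duality on the sphere), using connectivity of $\overline{\mathcal{T}_2}$ to bound the number of components and holes of $R$; this should yield the bound with an explicit small constant $c$. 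The remaining parts of the argument are routine.
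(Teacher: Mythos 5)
Your proposal is correct and follows essentially the same route as the paper, whose entire proof is the one-sentence observation that in the connected model the cutting tool is connected and ``can only carve out one hole at a time,'' applied to a square with $\Theta(n)$ holes. The Hole Lemma you isolate as the crux is exactly that assertion, and it does hold with $c=1$: since $\mathcal{T}_2$ is connected, every component of $\mathcal{T}_1\cap\mathcal{T}_2$ either has closure meeting $\partial\mathcal{T}_1$ or is all of $\mathcal{T}_2$, so every bounded complementary component of the kept piece must contain either a hole of the old $\mathcal{T}_1$ or the single connected set $\overline{\mathcal{T}_2}$ --- your ``islands, not holes'' observation is the right mechanism, and your explicit bookkeeping for reset/undo/redo is in fact more careful than the paper's.
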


\begin{proof}
	Consider the target shape to be a square with $n/3$ triangular holes. Since we consider the connected model, the cutting tool created by any operations is connected and it can only carve out one hole at a time. 
\end{proof}

\subsection{Disconnected Model} \label{sec_lb_disconnected}

In the disconnected model, we conjecture that most shapes require
$\Omega(n)$ snip operations to produce
(see Conjecture~\ref{conj:2D disconnected}), but such a proof or explicit
shape remains elusive.
The challenge is that a cutting tool may be reused many times,
which for some shapes leads to an exponential speedup.
Indeed, we prove in Theorem~\ref{thm:disconnected log}
that \emph{every} shape requires $\Omega(\log n)$ snips.
As a step toward a linear lower bound, we prove that a natural 1D version
of the disconnected Snipperclips model has a linear lower bound.

Define the \emph{disconnected 1D Snipperclips} model
(with arbitrarily many tools) as follows.
A \emph{1D tool} is a disjoint set of intervals in $\mathbb R$.
A \emph{1D snip} operation takes a translation of one tool,
optionally reflects it around the origin, and subtracts it from another
tool, producing a new tool.

The main difference with the disconnected model that we consider is that we allow for arbitrarily many tools. Alternatively, this is can be done with two tools if you can recall any shape that has been created in the past (i.e., having infinitely many undo, redo, and reset operations).

\begin{theorem} \label{thm:1D disconnected}
For $M$ a positive integer, consider the set of all 1D tools consisting of $n$ disjoint intervals
  having integer endpoints between $0$ and~$M$.
  For all positive integers $n$ and all $\epsilon \in (0,1)$, for all sufficiently large $M$, almost all such tools (at least a $1-\epsilon$ fraction of them) require
  at least $2n$ 1D snip operations
  to build from a single 1D tool consisting of a single interval.
\end{theorem}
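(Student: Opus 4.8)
The plan is to use a counting (dimension/entropy) argument: bound from above the number of distinct 1D tools reachable by at most $2n-1$ snips, and show this is a vanishing fraction of the total number of target tools as $M\to\infty$. First I would count the targets: a 1D tool with $n$ disjoint integer intervals inside $[0,M]$ is determined by choosing $2n$ distinct integers in $\{0,1,\dots,M\}$ (the sorted endpoints), so there are on the order of $\binom{M+1}{2n} = \Theta(M^{2n})$ of them; the exact constant does not matter, only that the exponent is exactly $2n$.

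Next I would bound the reachable set. A computation of length $k$ is a sequence of $k$ snip operations; after the $i$-th snip we have produced some tool, and each snip subtracts a translated-and-possibly-reflected copy of one previously-produced tool from another previously-produced tool. The key observation is a \emph{degrees-of-freedom} count: starting from the single unit interval (which has no free parameters once we fix it, or $O(1)$ parameters), each snip introduces only a bounded number of new real parameters — namely the translation amount and the reflection bit for the subtracted copy, plus the (finite) choice of which two earlier tools play the roles of minuend and subtrahend. Crucially, the endpoints of the resulting tool are obtained from the endpoints of the two operands by additions/subtractions of these parameters, so after $k$ snips every endpoint of every intermediate tool is an integer-coefficient affine combination of the $O(1)$ original parameters and the $k$ translation parameters; thus the set of tools producible in $k$ snips, when we restrict to integer endpoints in $[0,M]$, is covered by $M^{O(k)}$ choices — more precisely, once the discrete data (reflection bits, operand choices, and the combinatorial pattern of which endpoints survive each subtraction) is fixed, the tool is determined by the $\le k+O(1)$ translation values, each an integer in a range of size $O(M)$. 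The number of choices of discrete data depends only on $k$ (not on $M$), so the total count of tools reachable in exactly $k$ snips is at most $f(k)\cdot M^{k+c}$ for some function $f$ and constant $c$.

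Summing over $k \le 2n-1$, the number of tools reachable in fewer than $2n$ snips is at most $g(n)\cdot M^{2n-1+c}$ for a fixed constant $c$; the obstacle here is pinning down $c$ — I expect $c=1$ works, since the single starting interval contributes one essential translational degree of freedom, giving a bound $g(n)\cdot M^{2n}$, but with the \emph{wrong} (or merely matching) exponent. To beat $M^{2n}$ I would sharpen the count: translations/reflections that move a tool so that it does not overlap its operand are useless, and more importantly the final tool's position is only defined up to a global translation of the whole construction, which removes one degree of freedom; equivalently, fix the target to have its leftmost endpoint at $0$, reducing the target count to $\Theta(M^{2n-1})$ while the reachable count loses a factor of $M$ as well, so I instead argue directly that the reachable-in-$<2n$ count is $O(M^{2n-1})$ against a target count $\Theta(M^{2n-1})$ with a strictly smaller leading constant — no, cleaner: keep targets at $\Theta(M^{2n})$ and show reachable-in-$(2n-1)$ is $O(M^{2n-1})$, because a construction of $2n-1$ snips has at most $2n-1$ translational parameters and the starting interval, after fixing global translation, contributes none. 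Then the ratio is $O(1/M)\to 0$, so for $M$ large enough at least a $1-\epsilon$ fraction of targets are unreachable in $<2n$ snips, i.e. require at least $2n$, which is the claim.

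The main obstacle, as flagged, is the bookkeeping that makes the exponent in the reachable bound come out to exactly $2n-1$ rather than $2n$: one must be careful that (i) reflections contribute only bits, not continuous parameters, (ii) reusing an earlier tool does not introduce fresh continuous parameters for that tool (only the translation aligning it does), and (iii) the global-translation gauge freedom is correctly quotiented out exactly once. Once the affine-parametrization lemma is stated and proved by induction on the length of the construction, the rest is the routine counting sketched above.
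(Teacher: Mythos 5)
Your overall strategy is the same as the paper's: parametrize every endpoint of every intermediate tool as a $0$--$1$ (integer-coefficient) combination of the snip offsets, then compare the number of degrees of freedom ($\le 2n-1$ offsets after $2n-1$ snips, with the initial interval fixed) against the $2n$ endpoint coordinates of the target, getting a ratio of $O(1/M)$. The structural lemma you state by induction and the final exponent bookkeeping both match the paper. However, there is a genuine gap at the discretization step: you assert that, once the discrete data is fixed, the tool is determined by the translation values, ``each an integer in a range of size $O(M)$.'' The snip offsets are arbitrary reals, and integrality of the \emph{target's} endpoints does not force integrality of the offsets (e.g.\ offsets $1/3$ and $2/3$ can combine to integer endpoints). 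A priori there are uncountably many offset tuples, so without discretizing them your count of reachable tools does not go through; this is exactly the technical heart of the paper's proof, and you have skipped it.

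The paper closes this gap as follows: the target's $2n$ integer endpoints satisfy $Ax=y$ for a $0$--$1$ matrix $A$ applied to the offset vector $x$. If $A$ has deficient rank, one first sets the redundant offsets to zero (this does not change the reachable set); then a full-rank square submatrix $B$ has $0<|\det B|\le (k+1)!$, and Cramer's rule shows each surviving offset is a rational with denominator dividing $\det B$, hence has at most $(k+1)!\,M$ possible values. The extra factor $(k+1)!$ is independent of $M$, so the count $M^{k}\cdot C(k)$ versus $\Theta(M^{2n})$ still gives the result. (An alternative repair closer to your phrasing: for fixed discrete data the reachable endpoint vectors lie in an affine subspace of $\mathbb{R}^{2n}$ of dimension at most $k\le 2n-1$, and such a subspace meets $\{0,\dots,M\}^{2n}$ in at most $(M+1)^{2n-1}$ points. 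Either way, some argument replacing ``each offset is an integer'' is required.) Your extended deliberation about quotienting out a global translation is unnecessary in the paper's setup, since the initial tool is pinned to $[0,M]$ and contributes no free parameter.
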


\begin{proof}
Starting from $k=1$, the $k$th snip operation is determined by:
\begin{enumerate}
\item A choice of the $k+1$ existing tools for the cutting tool $T$; 
\item A choice of the $k+1$ existing tools for the cut tool $U$;
\item An offset $x_k$ of $U$ relative to $T$.
\end{enumerate}

If $T$ has interval endpoints $t_0$, $t_1$, \ldots~and $U$ has interval endpoints $u_0$, $u_1$, \ldots, then each interval endpoint of the tool created by the $k$th operation is either $t_j$ or $x_k+u_j$. The first tools have interval endpoints $0$ and $x_0 = M$ (the board width), so by induction on $k$, each interval endpoint of the tool created by the $k$th operation is of the form $\sum_{i \in I}x_i$ for some $I \subset \{0, \ldots, k\}$. Therefore, if we make, with $k < 2n-1$ operations, a tool with endpoints $y_0$, \ldots, $y_{2n-1}$, then there is a $(2n-1) \times (k+1)$ 0-1 matrix $A$ such that 
$A
\begin{bmatrix}
x_0\\
\vdots\\
x_k
\end{bmatrix}
=
\begin{bmatrix}
y_0\\
\vdots\\
y_{2n-1}
\end{bmatrix}.
$

If this matrix has rank $rk(A) < k+1$, set $k+1 - rk(A)$ of the $x_i$ to be 0 such that it still has a solution, and choose $rk(A)$ of the $y_i$ such that the $rk(A) \times rk(A)$ square matrix $B$ formed by restricting to the rows corresponding to nonzero $x_i$ and columns corresponding to those $y_i$ is full-rank. $\det(B)$ is a sum, over $rk(A)!$ permutations $\sigma$, of a product of entries of $A$ (or its negative). The entries of $A$ are 0 or 1, so $0 < |\det(B)| \le rk(A)! \le (k+1)!$. All the $y_i$ are integers, so for all $i$, $\det(B)x_i$ is an integer, since we have that $Bx = y$, so $x = B^{-1}y$, and $B^{-1}$ is $1/\det(B)$ times the cofactor matrix of B (which has only integer entries).

Therefore, the $k$th snip operation has at most $(k+1)^2$ choices for the cutting tools and $(k+1)!M < (k+1)^{k+1}M$ choices for the offset $x_k$, so the number of choices for operations up to the $(k-1)$st is at most 
$M^{k-1} k^{k^2}$. On the other hand, the number of 1D tools consisting of $n$ intervals with integer endpoints $y_0$, \ldots, $y_{2n-1}$ between $0$ and $M$ is 
$\binom{M}{2n} > (M-2n)^{2n} > 
(\frac{M}{2})^{2n}$.
If $k-1 < 2n$, then the total number of integer-endpoint tools with $n$ intervals is asymptotically (for large $M$) at most $O(M^{-1})$ times the 
number of integer-endpoint tools we can build in $k-1$ steps, so almost all integer-endpoint tools with $n$ intervals require at least $2n$ steps, as claimed.
In particular, if $\epsilon \in (0,1)$ and $M > 2^{2n}(2n)^{(2n)^2}\epsilon^{-1}$, then at most an $\epsilon$ fraction of
such tools can be built in fewer than $2n$ snip operations, as claimed. 
\end{proof}


We conjecture that the same linear lower bound applies to the 2D
(disconnected) model of Snipperclips as well:

\begin{conjecture} \label{conj:2D disconnected}
  For $M$ a positive integer, consider the collection of all possible 2D ``comb'' tools consisting of a
  $1 \times M$ rectangle with $n$ disjoint $1 \times t_i$ ``teeth''
  attached above it (by its side of length $t_i$), where each tooth has integer coordinates and $1 \le t_i \le M$ so that the construction fits a $2\times M$ rectangle.
  For all positive integers $n$ and all $\epsilon \in (0,1)$, for all sufficiently large $M$, almost all such tools (a $1-\epsilon$ fraction of them) require
  $\Omega(n)$ snip operations to build,
  even with arbitrarily many tools (and thus with
  arbitrary undo, redo, and reset operations).
\end{conjecture}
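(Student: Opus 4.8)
The plan is to replay the counting argument of Theorem~\ref{thm:1D disconnected} in the plane, with the extra bookkeeping needed to handle rigid motions. A sequence of $k$ snips, starting from the $1\times M$ base rectangle and allowing arbitrarily many tools, is specified by: for each step $i$, a choice among the (finitely many, $i$-bounded) already-created tools for the cutter and for the cuttee, together with a rigid motion $g_i=(\theta_i,\vec v_i)\in SO(2)\times\mathbb R^2$ positioning the cutter. Thus a whole construction is parametrized by $k^{O(k)}$ discrete possibilities together with $3k$ real numbers $\theta_1,\dots,\theta_k,\vec v_1,\dots,\vec v_k$. Every vertex of every tool created along the way is a coordinate of an intersection point of two edges, each edge being an old edge transported by some composition of the $g_i$; hence each vertex coordinate is a fixed semialgebraic function of $\cos\theta_1,\sin\theta_1,\vec v_1,\dots$ whose degree is bounded by some explicit $D(k)$ with $\log D(k)=O(k\log k)$ (each snip multiplies degrees by a constant and introduces a denominator from a $2\times2$ linear solve). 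Consequently, for each fixed choice of the discrete data, the set of reachable tools of the specific ``comb with $n$ teeth'' combinatorial type is the image of a semialgebraic map, hence a semialgebraic subset of the $2n$-dimensional comb space (coordinates: the $n$ tooth positions and the $n$ tooth heights) of dimension at most $\min(3k,2n)$ and degree at most $D(k)$.

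Given this, I would finish by comparing cardinalities over the integer grid, exactly as in the $1$D proof. The comb family of Conjecture~\ref{conj:2D disconnected} has $\Theta_n(M^{2n})$ members (about $M^n$ choices of the $n$ tooth $x$-positions and $M^n$ choices of the heights $t_i$). On the other hand, a semialgebraic set of dimension $d$ and degree $D$ meets the grid $\{0,1,\dots,M\}^{2n}$ in at most $O_n(D\,M^{d})$ points (slice with generic coordinate hyperplanes down to a curve of degree $\le D$, then count fibers). Summing over the $k^{O(k)}$ discrete choices, the number of combs buildable in $k$ snips is at most $k^{O(k)}\,M^{\min(3k,2n)}$. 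If $k\le cn$ for a suitable constant $c<2/3$, this is $O(M^{-1})\cdot M^{2n}$ once $M$ is large, so a $1-\epsilon$ fraction of combs need more than $cn$ snips, i.e.\ $\Omega(n)$ snips; one would then optimise the constant separately (restricting teeth to a bounded height range makes the ``height'' degrees of freedom behave like a second independent $1$D instance and should recover a constant close to $2$).

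The main obstacle --- and the reason this is only a conjecture --- is making the degree bound $D(k)$ (equivalently, the grid-comb count per configuration) genuinely small \emph{in the presence of rotations}. When all motions are axis-parallel translations, optionally with a reflection, vertex coordinates are honest $0$--$1$ combinations of the $\vec v_i$, precisely as in the proof of Theorem~\ref{thm:1D disconnected}: the reachable set is a union of $\le k$-dimensional affine lattices, and the determinant argument there caps the number of grid combs on each at $(k+1)!$, yielding a clean $\Omega(n)$ bound in this sub-model. But an intermediate tool produced by a rotation can carry a diagonal edge whose subtraction from another tool happens to create an axis-parallel edge of the target comb, and it is not clear that such ``lucky alignments'' are ever wasteful, so one cannot simply assume axis-parallel play. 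Two ways forward: (a) a structural reduction showing that for a target all of whose edges are axis-parallel, any $s$-snip construction converts to an $O(s)$-snip axis-parallel one (I suspect this is false in full generality but may hold for combs); or (b) pushing the semialgebraic counting through honestly, the delicate point being that although $\cos^2\theta_i+\sin^2\theta_i=1$ keeps the dimension count at the expected $3k$, the degree --- and hence the grid-point bound per configuration --- must be controlled uniformly even as a single tool is translated and reused unboundedly many times, which is exactly the reuse phenomenon that produces exponential speedups for other (non-comb) shapes.
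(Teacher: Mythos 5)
The statement you are proving is labeled a \emph{conjecture} in the paper, and the paper offers no proof of it: the authors explicitly say that ``such a proof or explicit shape remains elusive,'' and the only surrounding text is a discussion of why their 1D counting argument (Theorem~\ref{thm:1D disconnected}) does not transfer --- namely, that reducing a 2D tool to its family of 1D slices produces exponentially many candidate 1D tools per step and so yields only a logarithmic bound. So there is no ``paper's proof'' to compare against; what can be said is that your route is genuinely different from the one the authors considered and abandoned. Rather than discretizing the continuous data (the determinant trick that bounds the offsets in the 1D proof), you bound the \emph{dimension} of the reachable set: $k$ snips contribute $3k$ real parameters, the reachable combs form a finite union of semialgebraic images of the parameter space, and a $d$-dimensional semialgebraic set meets $\{0,\dots,M\}^{2n}$ in $O_{n,k}(M^d)$ points, to be compared with the $\Theta_n(M^{2n})$ combs. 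This dimension count correctly explains why the ``exponential speedup'' phenomenon the authors worry about (a tool with exponentially many spikes built in $O(k)$ snips, which could cut many teeth at once) is not fatal: such a tool produces only a $3k$-parameter family of combs.

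You are right that your write-up is not yet a proof, but I think you have mislocated the gap. For the conjecture as stated --- $n$ and $\epsilon$ fixed, $M$ sufficiently large --- the degree bound $D(k)$ you agonize over only enters as a multiplicative constant $C(n,k)$ in front of $M^{\min(3k,2n)}$, and \emph{any} finite bound suffices, since $3k<2n$ already forces the count to be $o(M^{2n})$. Rotations do not add degrees of freedom ($\cos\theta,\sin\theta$ sit on a one-dimensional variety, as you note), so the dimension $3k$ is unaffected; and ``unbounded reuse'' of a tool is impossible within a budget of $k$ snips, since each reuse is itself a snip. The step that actually needs rigor is the one you pass over quickly: that the parameter space $(SE(2))^k$ decomposes into \emph{finitely many} cells, with the number of cells bounded in terms of $k$ alone, on each of which the coordinates of the final tool are given by fixed rational functions --- this requires controlling the combinatorial type of each intermediate overlay (intermediate tools can have $2^{O(k)}$ edges, so the cell count is a tower in $k$, but still independent of $M$), and then the lattice-point bound for semialgebraic cells needs a proof or citation (e.g., cylindrical decomposition plus the trivial count for graphs over $d$-dimensional coordinate subspaces). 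If you carry those two points through carefully, your argument appears to give the conjecture with the explicit constant $2n/3$, which would be a result the paper does not contain; as written, however, it is an outline with the load-bearing finiteness claims asserted rather than established.
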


\begin{figure}[ht]
	\centering
	\includegraphics{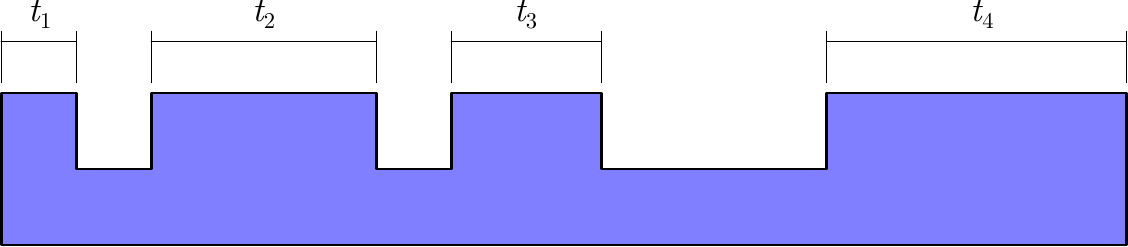}
	\caption{Illustration of Conjecture~\ref{conj:2D disconnected}. Note that because the teeth are disjoint and have integer coordinates, they are at least one unit apart.}
	\label{fig:conjecture}
\end{figure}

Unfortunately, a reduction from 2D Snipperclips to 1D Snipperclips remains
elusive.  A natural approach is to view a 2D tool $T$ as a set of 1D tools,
one for each direction that has perpendicular edges in~$T$.
But in this view, it is possible in a linear number of snips to construct
a 2D tool containing exponentially many 1D tools, by repeated generic
rotation and snipping of the tool by itself.
The information-theoretic argument of Theorem~\ref{thm:1D disconnected}
might still apply, but given the exponential number of tool choices in each
step, it would give only a logarithmic lower bound on the number of snips.
We can instead prove such a bound holds for \emph{all} shapes:


\begin{theorem} \label{thm:disconnected log}
  Every tool shape with $n$ edges requires $\Omega(\log n)$ snip
  operations to build from initial shapes of $O(1)$ edges
  in the disconnected model.
  This result holds even with arbitrarily many tools (and thus with
  arbitrary undo, redo, and reset operations).
\end{theorem}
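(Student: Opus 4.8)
The plan is to run a counting / information-theoretic argument analogous to the one in Theorem~\ref{thm:1D disconnected}, but now tracking how much combinatorial information about the final shape can possibly be produced in $k$ snips. The key observation is that each snip produces a new tool whose boundary consists of pieces of the boundaries of the two input tools (after a rigid motion of one of them). So if every existing tool has at most $E$ edges, then a single snip produces a tool with at most $O(E)$ edges: roughly, each of the two boundaries is cut into at most $O(E)$ arcs by the $O(E^2)$ intersection points, but along the boundary of the new tool each edge of each input contributes a bounded number of surviving fragments, so the edge count at most multiplies by a constant factor. Starting from $O(1)$ edges, after $k$ snips every tool has at most $c^k$ edges for some constant $c>1$. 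Hence to ever reach a tool with $n$ edges we need $c^k \ge n$, i.e. $k = \Omega(\log n)$.

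First I would make precise the claim that one snip at most multiplies the edge count by a constant. Let $T$ have $a$ edges and $U$ have $b$ edges, and consider the new tool $T' = T \setminus \mathrm{cl}(U)$ (after placing $U$ by some rigid motion). The boundary $\partial T'$ is contained in $\partial T \cup \partial U$. Two edges (one from $T$, one from $U$) cross in at most one point if they are segments; so $\partial T$ is subdivided into at most $a + (\text{number of crossings on } \partial T)$ pieces and similarly for $\partial U$. The subtlety is that the number of crossings could be as large as $ab$, which is \emph{not} $O(a+b)$. The fix is that we do not care about the total number of subdivision pieces but only about how many of them actually appear on $\partial T'$: along a single edge $e$ of $T$, the portion of $e$ lying on $\partial T'$ is $e \setminus U$, and since $U$ is a fixed polygon with $b$ edges, $e \cap U$ is a union of at most $O(b)$ subsegments — so naively $e$ still contributes $O(b)$ fragments and the bound degrades. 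To get a genuine \emph{constant} factor, one instead charges fragments to \emph{vertices} of the output: each edge of $T'$ is delimited at each end either by an original vertex of $T$ or $U$, or by a crossing of an edge of $T$ with an edge of $U$; and here is where the real argument must go — the paper presumably observes that in the disconnected model we may assume each snip is in ``general position'' and that we only ever need snips where the overlap region is connected / simple (or that the relevant complexity measure is the number of \emph{reflex} vertices or holes, not raw edges), reducing the per-step blowup to a constant. I would state the clean combinatorial lemma ``one snip increases the relevant complexity measure by at most a constant factor'' and prove it, then conclude by the geometric-series / $c^k \ge n$ computation above.

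The main obstacle is exactly this per-step blowup lemma: as written, subtracting a $b$-edge polygon from an $a$-edge polygon can create $\Theta(a+b)$ edges \emph{additively} but could in bad configurations also fragment a single edge into many pieces, and one must argue that such fragmentation cannot be exploited to build high-complexity shapes faster. The resolution is that fragmenting one edge of $T$ into $k$ pieces requires $U$'s boundary to already have $\Omega(k)$ edges near that edge, so the new ``detail'' was already paid for in $U$; formalizing this as ``the number of edges of $T'$ is at most (number of edges of $T$) $+$ (number of edges of $U$) $+ O(1)$'' — i.e. the blowup is actually \emph{additive}, giving at most $O(2^k)$ after $k$ snips and still $k = \Omega(\log n)$ — is the crux. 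I would prove: when overlaying $\partial T$ and $\partial U$, an edge of $T$ that is split into $j+1$ fragments on $\partial T'$ has $j$ of its endpoints at crossings with distinct edges of $U$, and each edge of $U$ is charged this way by at most $O(1)$ edges of $T$ overall once we account for nesting of the intervals $e\cap U$; this yields the additive bound. With the lemma in hand, induction on $k$ gives that any tool built in $k$ snips from $O(1)$-edge tools has $O(2^k)$ edges, so $n = O(2^k)$ and $k = \Omega(\log n)$, which holds regardless of how many tools / undo / redo / reset operations are available since those only let us reuse previously-built (hence already complexity-bounded) tools.
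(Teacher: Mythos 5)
Your overall strategy is exactly the paper's: the paper's entire proof is the one\-/line assertion that snipping tools with $n_1$ and $n_2$ edges yields a tool with at most $n_1+n_2$ edges, followed by the $c^k$ computation. You have correctly zeroed in on that per\-/snip blowup bound as the crux, but your proposed resolution does not work: the additive bound on raw edge counts is false in the disconnected model, and so is the charging claim you base it on. Concretely, let $T$ be $p$ disjoint thin horizontal rectangles and $U$ be $q$ disjoint thin vertical rectangles, placed so that every horizontal strip crosses every vertical strip; then $T\setminus\mathrm{cl}(U)$ consists of $p(q+1)$ tiny rectangles with $\Theta(pq)$ edges, while the inputs have only $\Theta(p)+\Theta(q)$ edges. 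Here a single edge of $U$ is ``charged'' by $\Theta(p)$ edges of $T$, not $O(1)$: the $\Omega(j)$ edges of $U$ that you say ``pay for'' the fragmentation of one edge of $T$ are reused to fragment every other edge of $T$ as well, which is precisely why the worst case is multiplicative, $O(n_1 n_2)$. Feeding the multiplicative bound into the recursion gives only $E_k \le c^{2^k}$ and hence only $k=\Omega(\log\log n)$, not the claimed $\Omega(\log n)$; and your suggestion to restrict to general\-/position or connected\-/overlap snips is not available, since the model allows disconnected tools and arbitrary placements.

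The repair is to change the complexity measure, as you briefly speculate (``the relevant complexity measure is \dots not raw edges'') but do not carry out. Let $L(T)$ be the number of distinct lines supporting edges of $T$. Every edge of $T\setminus\mathrm{cl}(U)$ lies on a line supporting an edge of $T$ or of the placed copy of $U$, so this measure genuinely satisfies the additive bound $L(T')\le L(T)+L(U)$ (even in the grid example above), and therefore $L\le c\cdot 2^k$ after $k$ snips from constant\-/complexity tools. On the other hand, every vertex of a polygonal domain is an intersection point of two of its supporting lines and distinct vertices are distinct points, so an $n$-edge tool satisfies $n\le\binom{L}{2}$, i.e.\ $L=\Omega(\sqrt n)$. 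Combining the two gives $2^k=\Omega(\sqrt n)$ and hence $k=\Omega(\log n)$. This is the missing idea in your proposal (and it also repairs the paper's own one\-/line argument, which asserts the same false additive edge bound that you were unable to prove).
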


\begin{proof}
  Each snip operation involving two tools with $n_1$ and $n_2$ edges,
  respectively, produces a shape with at most $n_1+n_2$ edges.
  Thus, if we start with tools having $c=O(1)$ edges,
  then in $k$ snips we can produce a shape having at most $c^k$ edges,
  proving a lower bound of $k \geq \log_c n$.
\end{proof}


\section{Making one shape with snips and resets}\label{sec_noundo}
\subsection{Connected Model}\label{sec_conec}
In the connected model, the shapes must remain connected. Whenever the snip operation would break a tool into multiple pieces, we can choose one piece to keep. In this model, we show that $O(n)$ snips suffice to create any polygonal shape of $n$ vertices.

\begin{theorem}\label{thm:connected}
We can cut one of the tools into any target polygonal domain $P_1$ of $n$ vertices using $O(n)$ snip operations (and no reset or undo operations) in the connected model.
\end{theorem}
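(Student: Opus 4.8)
The plan is to triangulate the target polygon $P_1$ and carve out the complement one "bite" at a time, keeping the connected component that contains $P_1$ after each snip. First I would fix a triangulation of $P_1$ using $O(n)$ triangles (this is possible even for polygonal domains with holes, yielding $O(n)$ triangles total). The complement of $P_1$ inside the unit square is then a polygonal region that we also decompose into $O(n)$ convex pieces — say triangles — each of which must be removed from the tool $\mathcal T_1$. The key idea is that each such convex piece can be removed by a single snip: we reset is not available, so instead we must maintain on the other tool $\mathcal T_2$ a shape that can act as a "knife" for the piece we currently want to remove. Since $\mathcal T_2$ starts as a unit square and each triangular piece to remove is a convex region contained in the unit square (possibly after scaling — but note pieces may be large), we can rotate and translate $\mathcal T_2$ (or rather, an appropriately pre-cut version of it) so that its intersection with $\mathcal T_1$ is exactly the piece we want to delete.

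More carefully, I would proceed as follows. Process the $O(n)$ convex pieces of the complement in an order so that removing them one at a time never disconnects the part of $\mathcal T_1$ we care about — e.g., peel them from the outside in, always removing a piece adjacent to the current outer boundary, so that $P_1$ stays within a single connected component of the tool. For the knife: observe that a half-plane cut suffices to chip off a piece if that piece is the intersection of $P_1$'s complement with a half-plane locally; more generally, because $\mathcal T_2$ is a full unit square we can position it to cover any desired convex piece whose diameter is at most $1$, but to handle pieces of arbitrary size we instead cut pieces that are intersections of the tool with a halfplane through a square edge. The cleanest route is: repeatedly use $\mathcal T_2$ (still a unit square, large enough to act as a halfplane locally at the scale of a single edge of $P_1$) to make a straight cut along each edge of the triangulation of the complement, removing one triangle per snip. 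Since a unit square, positioned with one side flush along a cutting line and covering the triangle to be removed, realizes exactly that halfplane cut when the triangle is small; for large triangles we first subdivide the complement into pieces of bounded diameter, still only $O(n)$ many if the target has bounded size (which we may assume by scaling $P_1$ into, say, a disk of radius $1/4$).

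The main obstacle I anticipate is the interaction between the connectivity constraint and the reuse of $\mathcal T_2$ as a knife without resets: every snip permanently alters $\mathcal T_2$ as well (we are cutting it by $\mathcal T_1$? no — snipping $\mathcal T_1$ by $\mathcal T_2$ leaves $\mathcal T_2$ unchanged, good), so actually $\mathcal T_2$ stays a pristine unit square throughout if we only ever snip $\mathcal T_1$. That removes the knife-maintenance worry entirely, and the real work is just (i) scaling $P_1$ to have bounded diameter so that $O(n)$ unit-square halfplane cuts tile its complement, and (ii) ordering the cuts so the component containing $P_1$ is never severed — which is guaranteed if each cut is a halfplane whose removed side is disjoint from $P_1$, since a halfplane cut of a connected set containing $P_1$ on one side keeps $P_1$'s side connected. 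Thus the count is $O(n)$ snips, no resets, no undos, and the argument goes through in the connected model.
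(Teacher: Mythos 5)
There is a genuine gap, and it is the step you dismiss as a non-worry. Your final plan keeps $\mathcal{T}_2$ as a pristine unit square and uses it only for halfplane-style cuts (the square positioned with one side flush along a cutting line, the target scaled down so the square behaves like a halfplane). But a snip removes the closure of $\mathcal{T}_1 \cap \mathcal{T}_2$, so every such cut removes $\mathcal{T}_1 \cap H$ for some halfplane $H$, and for $P_1$ to survive you need $P_1 \cap H = \emptyset$. The union of all halfplanes disjoint from $P_1$ is exactly the complement of the convex hull of $P_1$. Hence your knife can only carve $\mathcal{T}_1$ down to $\mathrm{Ch}(P_1)$: it can never reach into a reflex pocket of $P_1$, and it certainly cannot excavate a hole of the polygonal domain (any halfplane meeting a hole also meets $P_1$). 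The claim that each triangle of a triangulation of the complement ``is the intersection of $P_1$'s complement with a half-plane locally'' is false precisely at reflex vertices and holes, and subdividing into bounded-diameter pieces does not help, because the obstruction is the shape of the knife, not the size of the pieces. A second, smaller issue: triangulating and removing the \emph{entire} complement is more work than the connected model requires, since severed components can simply be discarded; that wholesale-removal strategy is what the paper is forced into in the disconnected model, where it costs $O(n^2)$.

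The paper's proof spends a constant number of snips up front to turn $\mathcal{T}_2$ into a thin isosceles ``needle'' (apex angle smaller than every external angle of $P_1$, base shorter than the minimum vertex--edge distance), and then merely traces the $n$ edges of $P_1$ with $O(1)$ snips each; whatever gets disconnected from the component containing $P_1$ disappears for free. The needle is progressively shortened against a reserved segment of $\mathcal{T}_1$'s boundary so that edges of every length class can be handled, which is the bookkeeping your proposal would also need once you accept that $\mathcal{T}_2$ must be reshaped. To repair your argument you would have to reintroduce the ``appropriately pre-cut version of $\mathcal{T}_2$'' you mention and then abandon, and explain how to maintain that cutting shape without resets --- at which point you have essentially reconstructed the paper's needle argument.
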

\begin{proof}

The idea is that we can shape $\mathcal{T}_2$ into a very narrow triangle, a \emph{needle}, and use that to cut along the edges of the target shape $P_1$. Whenever a snip disconnects the shape, we simply keep the one containing the target shape. Initially, we start with a long needle to cut the long edges of $\mathcal{T}_2$ and we gradually shrink the needle to cut the smaller edges.

Let $\alpha$ and $h$ be two small numbers to be determined.
Our needle will be an isosceles triangle, with the two equal-length edges making an angle of $\alpha$ and the base edge with length at most $h$.
We refer to the \emph{length} of the needle as the length of the equal-length edges.
We will choose $\alpha$ small enough so that (i) the needle can fit into all reflex vertices, and we choose $h$ small enough so that, (ii) when placed on an edge of the target polygon, the needle does not intersect a non-adjacent edge.

\begin{figure}[ht]
\centering
\includegraphics{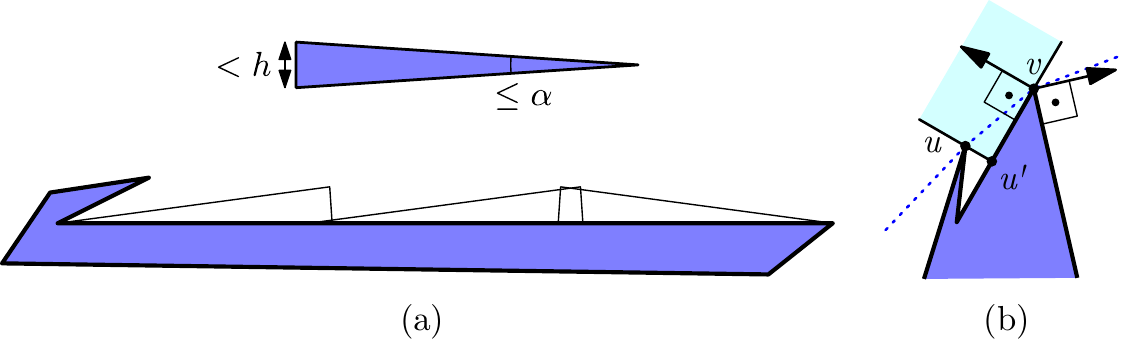}
\caption{(a) The needle is an isosceles triangle with apex at most $\alpha$ and a base edge of length at most $h$. The equal-length edges have length at most $1$ so that the whole triangle can fit inside a tool. (b) Dashed blue lines denote $\text{Ch}(P_1)$. The choice of $h$ guarantees that there is a segment of length $\ge h$ contained on the boundary of $\mathcal{T}_1$ that can be used to shrink the needle $\mathcal{T}_2$.}
\label{fig:needle}
\end{figure}

Refer to Figure~\ref{fig:needle}.
Let $v$ be an arbitrary vertex on the convex hull of $P_1$, denoted $\text{Ch}(P_1)$, and let $e_1$ and $e_2$ be its incident edges.
By the definition of convex hull, at least one edge in $\{e_1, e_2\}$ has the property that its normal vector at $v$ is outside of $\text{Ch}(P_1)$.
Without loss of generality, let that be $e_1$ and let $u$ be the vertex of $P_1$ whose orthogonal projection $u'$ on $e_1$ is closest to $v$ and $u$ lies on the closed half-plane defined by the supporting line of $e_1$ containing the normal vector.
Note that $u$ might be also in the convex hull and then $u = u'$.
We first make $\mathcal{T}_2$ into a needle of length $1/2$ using 2 snips. 
Fix a rigid transformation of $P_1$ so that it is entirely contained in $\mathcal{T}_1$.
We no longer move $\mathcal{T}_1$.
Use the needle to cut off a $90^\circ$ wedge at $u'$ containing the segment $u' v$ on its boundary and so that we do not cut off any point in the interior of $P_1$.
This is done with at most $4$ snips due to the length of the needle.

Now we group all edges of $P_1$ into sets based on their length. Let $\edgeset$ denote the full set of edges defining $P_1$ and let $\edgeset_i$, for $0 \leq i$, be the set of edges whose length is between $2^{-i-1}$ and $2^{-i}$. To cut along the edges of $\edgeset_i$, we use a needle where the equal-length edges have length $2^{-i-2}$. Such a needle can cut each edge in $\edgeset_i$ using at most four snips; see Figure~\ref{fig:needle}~(a). 
For an edge $e$, its nearest other features of $P_1$ are its two adjacent edges, the vertices closest to the edge, and the edges closest to its endpoints. 
We avoid cutting into the adjacent edges by placing the tip of the needle at the vertex when cutting near a vertex.
By Properties~(i)--(ii), we can make $e$ an edge of $\mathcal{T}_1$ without removing any point in the interior of $P_1$.

By making the cuts along the edges in the sets $\edgeset_i$ in increasing order of $i$ the needle has to only shrink, which is easily done by using the segment $u'v$ in the perimeter of $\mathcal{T}_1$ to shorten the needle by placing the short edge of the needle parallel to $u'v$.
This is possible as long as (iii) $h<\|u'v\|$ where $\|.\|$ denotes Euclidean norm. 
We are now ready to set $\alpha$ and $h$.
Property (i) is achieved if $\alpha$ is smaller than every external angle in $P_1$.
Property (ii) is achieved if $h$ is smaller than the shortest distance between an edge and a nonincident vertex. 
We also have that the length of the initial needle is $1/2$ and thus $\sin(\alpha/2) \le h$ using the law of cosines.

Recall that making the initial needle requires two snips, cutting each edge requires at most four snips and hence $O(n)$ snips in total, and reducing the needle length requires one snip per nonempty set $\edgeset_i$ of which there are at most $O(n)$. Thus, in total the required number of snips is $O(n)$. 
\end{proof}

\subsection{Disconnected Model}\label{sec_disconec}

We now consider the disconnected model. Recall that in this model we allow the tools to become disconnected. That is, when a snip would disconnect the tool, we keep all pieces. This is the actual version implemented in the game. Unfortunately, the method in the prior section will not work here. The first issue is that our tool must now remove the full area of the unwanted space rather than relying on separated components disappearing. The second issue is that we may cut up the boundary of our target in such a way that we can no longer ensure we have an exterior edge of sufficient size to efficiently trim our needle into the next needed shape. To solve these problems we end up using $O(n^2)$ snips and the reset operation which was not used in the previous section. The new algorithm works in phases where we only tackle an L-shaped portion of the shape at a time. This allows us to keep a solid square in the lower right which is sufficiently large to create the tools we need to carve out the desired shape. It also ensures that we can isolate the tool which we are using to carve the target region of the current phase. Thus each phase bounds how far into the target we must reach and ensures we have a block with which to alter our carving tool, allowing methods similar to those in Section~\ref{sec_conec} to complete each phase. We now give a formal description and proof of correctness.

In order to carve out a target shape $P_1$, we virtually fix a location of $P_1$ inside $\mathcal{T}_1$, pick a corner $c$ of $\mathcal{T}_1$ (say, the lower right one) and consider the set of distances $d_1, \ldots, d_{n'}$ from each of the vertices in the fixed location of the target shape $P_1$ to $c$ in decreasing order under the $L_\infty$-metric. For simplicity assume that all distances are distinct, and thus $n' = n$ (this can be achieved with symbolic perturbation). We refer to the part of $\mathcal{T}_1$ not in $P_1$, i.e., $\mathcal{T}_1 \setminus P_1$, as the \emph{free-space}. We will remove the free-space in $n$ steps, where in each step $i$ we remove the free-space from an $L$-shaped region $Q_i$ that is the intersection of $\mathcal{T}_1$ and an annulus formed by removing the $L_\infty$-ball of radius $d_i$ from the $L_\infty$-ball of radius $d_{i-1}$ centered at $c$. We argue that in each step we will need $O(n)$ snips and resets, thus creating the target shape in $O(n^2)$ operations. Our inductive step is given in the following lemma.

\begin{lemma}\label{lem:cut-L-shape}
The free-space in region $Q_i$ can be removed in $O(n)$ snip and reset operations provided that $\bigcup_{j \geq i} Q_j$ is a square in $\mathcal{T}_1$.
\end{lemma}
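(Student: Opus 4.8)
The plan is to mimic the needle-based construction from Theorem~\ref{thm:connected}, but adapt it so that it works \emph{inside} the annular region $Q_i$ while exploiting the guaranteed square $S = \bigcup_{j\ge i} Q_j$ as a workshop for building and re-sharpening cutting tools. First I would observe that, since the $L_\infty$-ball radii $d_{i-1}$ and $d_i$ are both vertex-distances of $P_1$ from the corner $c$, no vertex of $P_1$ lies strictly inside the open annulus; hence within $Q_i$ the boundary of $P_1$ consists only of straight edge-pieces crossing the annulus (no reflex or convex vertices of $P_1$ are interior to $Q_i$), although there may be $\Theta(n)$ such crossing pieces. This is what keeps the per-phase work at $O(n)$: we only ever have to cut straight along boundary pieces, never negotiate a vertex inside the phase region. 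The free-space of $Q_i$ is then a union of regions bounded by these crossing edge-pieces, the two $L_\infty$-circles, and the sides of $\mathcal{T}_1$.

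The main construction proceeds as follows. (1) Using two snips, carve $\mathcal{T}_2$ into a needle (thin isosceles triangle) as in Theorem~\ref{thm:connected}; the square $S$ inside $\mathcal{T}_1$ is large enough (side length $d_i$, which is at least the smallest inter-vertex feature size up to constants) to host such a needle and to serve as the ``whetstone'' segment for shrinking it. (2) Grouping the boundary pieces of $P_1\cap Q_i$ into length classes $\edgeset_0,\edgeset_1,\dots$ exactly as before, cut along each piece with a needle of length $\approx 2^{-j-2}$ for class $\edgeset_j$, using $O(1)$ snips per piece and shrinking the needle once per nonempty class by using a straight unit-length subsegment of $\partial S$ (which lies on $\partial\mathcal{T}_1$ and is never destroyed during phase $i$). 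Because we must now actually \emph{remove} the free-space area rather than letting components vanish, after separating $Q_i$'s free-space into bounded chunks with needle cuts along $\partial P_1$ we sweep each chunk away: each chunk is contained in an $L_\infty$-annulus of width $d_{i-1}-d_i$, so it has bounded aspect complexity, and $O(1)$ further snips per chunk (placing a suitably sized tool over it, anchored on $\partial S$) erase it. Summing, each of the $O(n)$ boundary pieces and $O(n)$ free-space chunks costs $O(1)$ operations, plus $O(n)$ shrink operations and the reset used to rebuild the needle/clearing tool between sub-steps, giving $O(n)$ total. (3) Crucially, throughout phase $i$ we keep $S$ untouched (we only ever cut \emph{within} $Q_i$, using tools positioned so their removed region stays inside $Q_i$), so the lemma's hypothesis is maintained for phase $i+1$ with the smaller square $\bigcup_{j\ge i+1}Q_j$.

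The hard part — and the place I would spend the most care — is step~(2)'s area-removal subroutine together with the ``isolation'' guarantee: after cutting along all $\partial P_1$ pieces in $Q_i$, the free-space need not fall into $O(1)$-complexity chunks for free, because a single free-space region can meander along the annulus touching many boundary pieces and both circular arcs. I would handle this by first making ``radial'' relief cuts: use the needle to cut $\mathcal{T}_1$ along $O(n)$ short $L_\infty$-radial segments (one near each boundary piece) so that the free-space is partitioned into pieces each lying in a single ``cell'' of the annulus bounded by two consecutive radial cuts and the two arcs; each such cell-piece is then of bounded descriptive complexity and is cleared with $O(1)$ snips by a reset-then-cut using a tool shaped from $S$. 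One must check that these radial relief cuts never enter the interior of $P_1$ (they can be placed arbitrarily close to a boundary piece, on its free-space side, by Property~(ii)-type feature-size arguments) and never bite into $S$. The other delicate point is bounding the needle/clearing-tool sizes against the feature size of $P_1$ restricted to $Q_i$: I would argue the relevant feature size is $\Omega$ of a quantity depending only on $n$ (after symbolic perturbation makes all $d_i$ distinct), so that the number of length classes, hence shrink operations, is $O(n)$ and no operation count depends on the feature size. Granting these, the bookkeeping that the total is $O(n)$ per phase — $O(n)$ for boundary cuts, $O(n)$ for radial cuts, $O(n)$ for chunk clearing, $O(n)$ for needle reshaping, each with a constant number of resets — is routine.
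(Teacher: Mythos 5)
Your setup (no vertices of $P_1$ interior to the annulus, $O(n)$ crossing pieces, keeping the inner square solid for the next phase) matches the paper's, but the core of your construction does not survive the move to the disconnected model, in two places. First, you cannot ``carve $\mathcal{T}_2$ into a needle using two snips'' here: those two snips merely incise two thin slits into the unit square, and the material on the far side of each slit remains part of $\mathcal{T}_2$. When you then bring this ``needle'' up to an edge of $P_1$, the whole slitted square intersects $\mathcal{T}_1$ and the snip removes far more than the intended sliver. Isolating a usable needle genuinely requires clearing everything else out of (the relevant part of) $\mathcal{T}_2$, which is itself an area-removal problem.

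Second, and more centrally, your area-removal subroutine --- ``$O(1)$ further snips per chunk, placing a suitably sized tool over it, anchored on $\partial S$'' and ``cleared with $O(1)$ snips by a reset-then-cut using a tool shaped from $S$'' --- is precisely the step the paper's entire proof is devoted to, and it is not free. To erase a free-space chunk $C$ you must first shape $\mathcal{T}_2$ so that its intersection with $\mathcal{T}_1$ is exactly $C$ (or a piece of it); the paper does this by positioning $\mathcal{T}_2$ so that only a bounding sub-square overlaps $\mathcal{T}_1$ and then stamping away that sub-square minus the desired tool shape in a grid pattern, which costs $O(1)$ snips only if $\mathcal{T}_1$ still contains a \emph{solid} region whose area is a constant fraction of that sub-square. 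Guaranteeing such a solid stamp at every step is why the paper triangulates the free space of $Q_i$, orders the triangles along the dual tree (cross-triangles first, then fan-triangles from $t_b$ to $t_r$), and exhibits an explicit surviving solid region for each case ($S_{i+1}$ when it is large, the half-square triangle for cross-triangles, the small corner triangle and then a shrinking bounding box for the fan-triangles). Without this ordering and these explicit stamps your $O(1)$-per-chunk claim is unsupported; in particular when $S_{i+1}$ is tiny and most of $Q_i$ has already been cleared, there is no obvious solid tool-building region left. Your radial relief cuts do not help with this: they are themselves needle cuts, so they only subdivide the free space further without removing its area, and they inherit the first problem above.
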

\begin{proof}
\begin{figure}[t]
\centering
\includegraphics{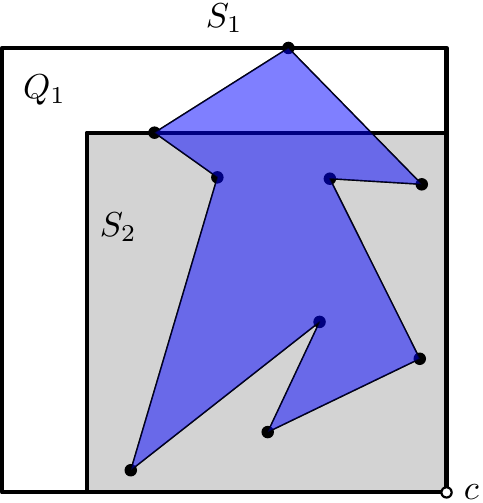}
\caption{The squares $S_1$ and $S_2$ along with L-shaped region $Q_1$ and corner $c$.}
\label{fig:definitions}
\end{figure}

Let $S_i$ be the bounding square containing $Q_i$ (see Figure~\ref{fig:definitions}) and let $F_i$ be the set of faces created when removing the boundary edges of the target shape from $Q_i$. By definition all vertices of the target shape on $Q_i$ must be on its inner or outer $L$-shaped boundary and all boundary segments must fully traverse $Q_i$, i.e., they cannot have an endpoint inside $Q_i$. It then follows that the set $F_i$ of faces consists of $O(n)$ constant complexity pieces. Now triangulate all faces of $F_i$ and let $T_i$ denote the resulting set of triangles (Figure~\ref{fig:L-shape}). Note that our aim is to remove some of the triangles of $T_i$. We will show that we can remove any triangle that fits in $S_i\setminus S_{i+1}$ with a constant number of cuts.

\begin{figure}[th]
\centering
\includegraphics{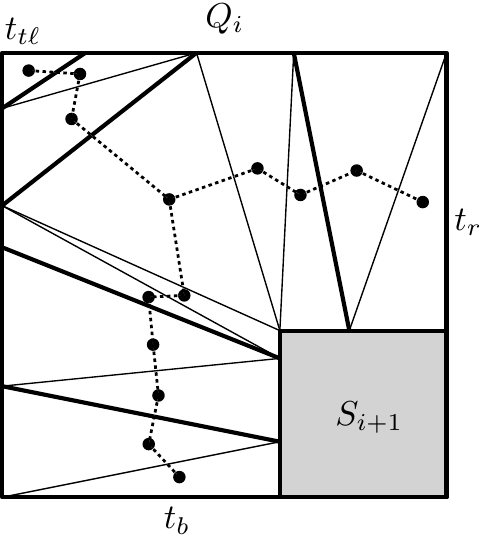}
\caption{An L-shaped region $Q_i$, the edges of the target shape that cross it (thick edges) define $F_i$. We further triangulate each face (thin edges), and consider the corresponding dual graph (dotted edges).}
\label{fig:L-shape}
\end{figure}

For simplicity in the exposition we first consider the case in which \textbf{\boldmath $S_{i+1}$ is large.} That is, the side length of $S_{i+1}$ is at least half the side length of $S_i$. Consider a triangle $t \in F_i$ that needs to be removed. To create a cutting tool move $\mathcal{T}_2$ so that its only overlap with $\mathcal{T}_1$ is $S_i$. Let $S'_i$ denote the area in $\mathcal{T}_2$ corresponding to $S_i$ and let $t'$ be the projection of $t$ on $\mathcal{T}_2$. Our goal will be to remove $S'_i \backslash t'$ from $\mathcal{T}_2$ without affecting $t'$. Note that we can create a cut where only $S'_i$ overlaps $\mathcal{T}_1$ in $S_i$, so the shape of $\mathcal{T}_2 \backslash S'_i$ does not influence the cut (Figure~\ref{fig:triangleprojection}). That means we do not have to cut it away and we do not need to worry about cutting part of it while creating a cutting tool within $S'_i$.

\begin{figure}[th]
\centering
\includegraphics{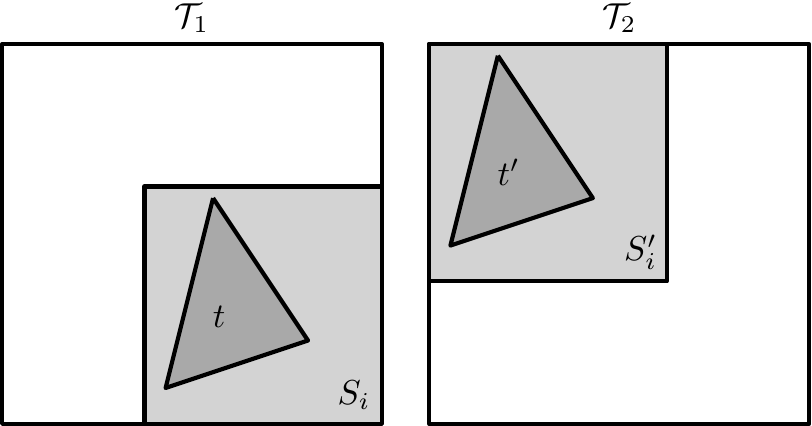}
\caption{A triangle $t$ in $S_i$ is cut out of $\mathcal{T}_2$ at $t'$.}
\label{fig:triangleprojection}
\end{figure}

Consider the halfspace $H$ defined by one of the bounding lines $\ell$ of $t'$ that does not contain $t'$. We can remove $H \cap S'_{i}$ by rotating $\mathcal{T}_1$ so that one of the sides of $\mathcal{T}_1$ along which $S_{i+1}$ is situated aligns with $\ell$ and repeatedly snip with $S_{i+1}$ in a grid-pattern as shown in Figure~\ref{fig:gridremoval}. Because $S_{i+1}$ is large compared to $S'_i$ we can remove $H \cap S'_{i}$ in $O(1)$ snips. We then apply the same procedure for the other two halfspaces that should be removed to obtain the cutting tool for $t$.
This means that, under the assumption that $S_{i+1}$ is large, each triangle can be removed in $O(1)$ snips. Since there are $O(n)$ triangles in $S_i$, the linear bound holds. 

\begin{figure}[th]
\centering
\includegraphics{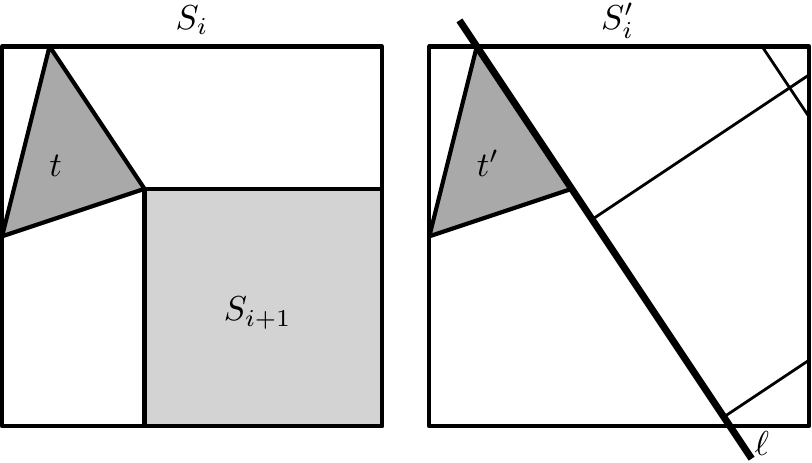}
\caption{If $S_{i+1}$ is large, we can use it to carve out any desired shape in $\mathcal{T}_2$ with $O(1)$ snips.}
\label{fig:gridremoval}
\end{figure}

It remains to consider the case in which \textbf{\boldmath $S_{i+1}$ is small.} That is, the side length of $S_{i+1}$ is less than half that of $S_i$, and potentially much smaller. Although the main idea is the same, we need to remove the triangles in order, and use portions of $Q_i$ that are still solid to create the cutting tools.

Let $G_i$ be the dual graph of $T_i$. This graph is a tree with at most three leaves. Two leaves correspond to the unique triangles $t_b$ and $t_r$ that share an edge with the lower and right boundary of $Q_i$ respectively and the third exists only if the top-left corner of $Q_i$ is contained in a single triangle $t_{t\ell}$, that is, there is at least one segment contained in $Q_i$ that connects the  top and left boundaries; see Figure~\ref{fig:L-shape}. Finally, we change the coordinate system so that $c$ is the origin, and $S_i$ is a unit square (note that the vertices of this square are $(-1,1)$, $(-1,0)$, $(0,1)$, and $c=(0,0)$).

We process the triangles in the following order. We first process the \emph{cross-triangles}, triangles with one endpoint on the left boundary and one on the top boundary (if any exist), starting from $t_{t\ell}$ following $G_i$ until we find a triangle that has degree three in $G_i$ which we do not process yet. The remaining \emph{fan-triangles} form a path in $G_i$ which we process from $t_b$ to $t_r$.

\textbf{Cross-triangles.}
Recall that, by the way in which we nest regions $Q_i$, there cannot be vertices to the right or below $S_i$. In particular, cross-triangles have all three vertices in the top and left boundaries of $Q_i$. Hence, while we have some cross-triangle that has not been processed, the triangle of vertices $(-1,0)$, $(0,1)$ and $c$ must be present in $\mathcal{T}_1$. This triangle has half the area of $Q_i$ and can be used to create cutting pieces in the same way as in the case where we assumed $S_{i+1}$ is large. Thus, we conclude that any cross-triangle of $Q_i$ can be removed from $\mathcal{T}_1$ with $O(1)$ snips. 

\textbf{Fan-triangles.}
We now process the fan-triangles in the path from $t_b$ to $t_r$ in $G_i$. We treat this sequence in two phases. First consider the triangles that have at least one vertex on the left edge of $S_i$ (that is, we process triangles up to and including the triangle that has degree three in $G_i$ if it exists). Consider the triangle $t$ of vertices $(0,1)$, $(0,3/4)$, and $(-1/4,3/4)$ (see Figure~\ref{fig:fan-triangles}). This triangle has $1/32$ of the total area of $S_i$. It is also still fully part of $\mathcal{T}_1$ until we cut out the triangle of degree 3. That is, every cross-triangle that was cut is above the diagonal from (-1,0) to (0,1) and any fan-triangle that has at least one vertex on the left edge of $S_i$ and has degree two in $G_i$ is below the line from (-1,1) to (0,1/2) (technically, below the line from (-1,1) to the top-right corner of $S_{i+1}$, but the higher line suffices for our purposes). So we can use this triangle $t$ as a cutting tool to create the desired triangle in $\mathcal{T}_2$ to cut out any undesired fan-triangles up to and including the triangle of degree 3.

\begin{figure}
\centering
\includegraphics{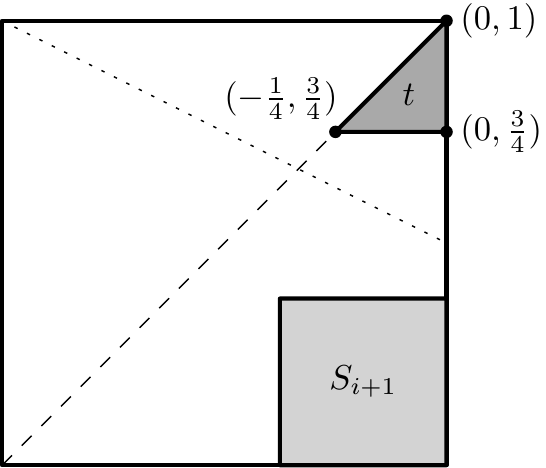}
\caption{The triangle used to cut out the fan-triangles. Cut cross-triangles are above the dashed line and cut fan-triangles are below the dotted line.}
\label{fig:fan-triangles}
\end{figure}

The remaining triangles have their vertices in the upper edge of $S_{i}$ and on the upper or left edge of $S_{i+1}$. In this case we must be more careful as we cannot guarantee the existence of a large square in $\mathcal{T}_1$. However, we do not have to clear the entire space $S_i'$ any longer. Instead it suffices to clear a much smaller area.

\begin{figure}[th]
\centering
\includegraphics{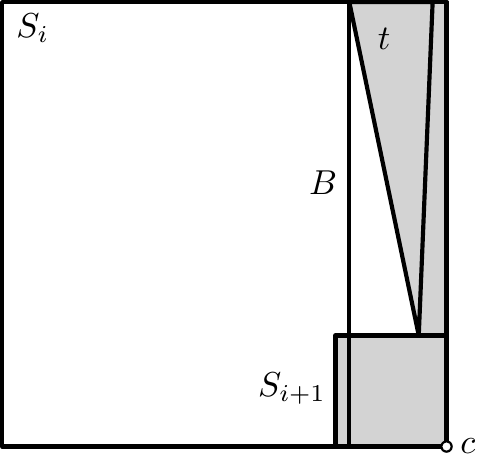}
\caption{The solid areas (grey) and bounding box $B$ when cutting fan-triangles with no vertices on the left boundary of $S_i$.}
\label{fig:bboxcut}
\end{figure}

Let $t$ denote the next triangle to be removed and let $B$ denote the bounding box of $t$ and $c$ (see Figure~\ref{fig:bboxcut}). As before consider moving $\mathcal{T}_2$ so that the only overlap with $\mathcal{T}_1$ is $B$, let $B'$ denote this area in $\mathcal{T}_2$ and $t'$ the projection of $t$ onto $B'$. To create a cutting tool we need only remove the area $B' \backslash t'$.

As before, we look for a region in $\mathcal{T}_1$ that has roughly the area of $B$ to use for carving the desired shape in $\mathcal{T}_2$. Let $w$ be the width of $B$. Also, let $h'$ be the height of $S_{i+1}$. Note that the height of $B$ is $1$, and since $S_{i+1}$ is small, we have $h'<1/2$. By construction of the bounding box, one of the vertices of $t$ will have $x$-coordinate equal to $-w$; let $q$ denote this vertex. The $y$-coordinate $y_q$ of $q$ is either $1$ or $h'$ as it must be on the upper edge of $S_i$ or on the upper boundary of $S_{i+1}$---if $t$ has vertices on the left boundary of $S_{i+1}$, then there is a vertex on the upper boundary of $S_i$ with lower $x$-coordinate. Now consider the triangle with vertices $(0,1), (0,h'), q$. This triangle has height at least $1-h'>1/2$ and width $w$, and thus its area is at least $1/4$ of the area of $B$. As in the previous cases, we use this triangle to create a cutting tool from $\mathcal{T}_2$ to remove triangle $t$ from~$\mathcal{T}_1$.

Thus, it follows that all free-space triangles can be removed with a cutting tool that is constructed from $\mathcal{T}_2$ in $O(1)$ snip and reset operations, hence we can clear $Q_i$ of free-space in total $O(n)$ operations.
\end{proof}

Because there are at most $n$ distinct distances, we repeat this procedure at most $n$ times, giving us the desired result.

\begin{theorem}
We can cut one of the tools into any target polygonal domain $P_1$ of $n$ vertices using $O(n^2)$ snip and reset operations in the disconnected model.
\end{theorem}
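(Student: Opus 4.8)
The plan is to combine the decomposition of the target shape $P_1$ into $L$-shaped annular regions $Q_1, \ldots, Q_n$ (as set up in the paragraph preceding Lemma~\ref{lem:cut-L-shape}) with the per-region cost bound already established in that lemma. First I would verify that the decomposition has the right structural property: ordering the vertices of $P_1$ by decreasing $L_\infty$-distance to the chosen corner $c$ of $\mathcal{T}_1$ yields nested $L_\infty$-balls, so each $Q_i$ is the intersection of $\mathcal{T}_1$ with a dyadic-like annulus and, crucially, $\bigcup_{j \geq i} Q_j$ is exactly the $L_\infty$-ball of radius $d_i$ intersected with $\mathcal{T}_1$, which (because $c$ is a corner of the square $\mathcal{T}_1$ and the ball is an axis-aligned square) is itself a square sitting in the corner of $\mathcal{T}_1$. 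This is precisely the hypothesis required to invoke Lemma~\ref{lem:cut-L-shape}.

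Next I would run the induction. Before step $i$, by the induction hypothesis all free-space in $Q_1, \ldots, Q_{i-1}$ has been removed, while $\bigcup_{j \geq i} Q_j$ remains a solid square in $\mathcal{T}_1$ (no vertex of $P_1$ has been ``reached'' yet inside that square, so nothing there has been carved). This solid square is the hypothesis of Lemma~\ref{lem:cut-L-shape}, so we may remove the free-space in $Q_i$ using $O(n)$ snip and reset operations; afterwards $\bigcup_{j \geq i+1} Q_j$ is still a solid square (it is the next smaller $L_\infty$-ball, which step $i$ did not touch since step $i$ only operates within the annulus $Q_i = S_i \setminus S_{i+1}$), re-establishing the invariant for step $i+1$. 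Since all $n$ distances $d_1 > \cdots > d_n$ are distinct (by symbolic perturbation), there are exactly $n$ regions, and after step $n$ the innermost ball has shrunk to the corner $c$; at that point $\mathcal{T}_1 \setminus P_1$ has been entirely removed, so $\mathcal{T}_1 = P_1$ as desired. Summing the per-step cost gives $n \cdot O(n) = O(n^2)$ snip and reset operations in total.

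I do not expect a serious obstacle here, since Lemma~\ref{lem:cut-L-shape} has already done the heavy lifting; the one point that needs care is checking that the operations performed inside step $i$ (which move and snip $\mathcal{T}_2$, and rotate $\mathcal{T}_1$) do not disturb the solid square $\bigcup_{j > i} Q_j$ or the already-carved regions $Q_1, \ldots, Q_{i-1}$. For the solid square this is immediate because every cut in the proof of Lemma~\ref{lem:cut-L-shape} is arranged so that the overlap of $\mathcal{T}_2$ with $\mathcal{T}_1$ lies within $S_i$ (or within the even smaller box $B$), never intruding into $S_{i+1}$; for the already-carved outer regions one observes that the cutting tool built from $\mathcal{T}_2$ is contained in $S_i$ as well, so snipping it into $\mathcal{T}_1$ only affects points of $\mathcal{T}_1$ inside $S_i$, i.e.\ inside $Q_i$ together with the untouched inner square. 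Hence the invariant is preserved and the bound follows.
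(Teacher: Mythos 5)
Your proposal is correct and follows essentially the same route as the paper: the theorem is obtained by iterating Lemma~\ref{lem:cut-L-shape} over the $n$ nested $L_\infty$-annuli $Q_1,\ldots,Q_n$, observing that $\bigcup_{j\geq i}Q_j$ is always a square at the corner $c$ and that each step's cuts stay inside $S_i$, for a total of $n\cdot O(n)=O(n^2)$ operations. The invariant-preservation check you flag is exactly the content the paper leaves implicit in its one-line conclusion after the lemma.
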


\section{Adding the undo operation}\label{sec_undo}
We now consider a more powerful model in which we can  {\em undo} the $k$ latest operations performed on either of the tools.
More formally, each snip or reset operation will change the current shape of one of the two tools (if a snip or reset operation does not change the shape of either tool, we can ignore it). Given a sequence of such operations, consider the subsequence $o_1, \ldots o_m$ of operations that have changed the shape of the first tool. Also, let $P_1^{(i)}$ be the shape of the first tool after $o_i$ has been executed. The $k$-{\em undo} operation on the first tool replaces the current shape with $P_1^{(m-k)}$. The $k$-undo operation on the second tool is defined analogously.

In this section we show that the $k$-undo operation is very powerful, and allows us to do much more than we can do without it. In particular, we can transform two tools into any two target polygonal domains in both the connected and disconnected model. This statement holds even if we force $k$ to be equal to 1.

\subsection{Connected Model}
We first consider the connected model. The general idea in this case is that we first construct the target shape in one of the two tools. In order to construct the target shape into the second tool, we repeatedly create a needle in the first tool, cut a part of the second tool, and perform an undo operation to return the first tool to its target shape.

\begin{theorem}
We can cut two tools $\mathcal{T}_1$ and $\mathcal{T}_2$ into any two target polygonal domains $P_1$ and $P_2$ of $n$ and $m$ vertices respectively using $O(n+m)$ snip, reset and $1$-undo operations in the connected model.
\end{theorem}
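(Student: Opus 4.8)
The plan is to realize the two targets one after the other, using the single undo to repair whichever tool is temporarily repurposed as a cutting needle. In the first phase, apply Theorem~\ref{thm:connected} to carve $P_1$ into $\mathcal{T}_1$ with $O(n)$ snips and no resets or undos. This phase modifies only $\mathcal{T}_1$, so afterwards the most recent operation that changed $\mathcal{T}_1$ is the snip that produced $P_1$; if the construction of Theorem~\ref{thm:connected} happens to end with a reset, stop one snip earlier and fold the missing sliver into $P_1$, or append one harmless snip. In the second phase we carve $P_2$ into $\mathcal{T}_2$ by re-running the needle algorithm of Theorem~\ref{thm:connected} with the roles of the two tools interchanged: now $\mathcal{T}_1$ supplies the needle and $\mathcal{T}_2$ is the shape being carved. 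That algorithm makes $O(m)$ snips on the carved shape, each time using a needle whose length lies in one of $O(\log m)$ classes.

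The new ingredient is producing these needles without permanently losing $P_1$. I would keep $P_1$ stored in $\mathcal{T}_1$ as the shape the undo returns to, which forces the invariant that $\mathcal{T}_1$ is altered by exactly one operation between two consecutive needle-uses: so that single operation must itself turn $P_1$ into (or close enough to) the required needle. A natural way to attempt this is to pre-carve, during the first few snips of the second phase, a small reusable stock of ``stencils'' into the part of $\mathcal{T}_2$ that later disconnecting cuts discard anyway --- one stencil $X_i$ per length class --- arranged so that a single snip of $\mathcal{T}_1 = P_1$ against $X_i$ (with $\mathcal{T}_1$ positioned to overlap only $X_i$) leaves exactly the class-$i$ needle in $\mathcal{T}_1$. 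One then does the $O(1)$ snips on $\mathcal{T}_2$ prescribed by Theorem~\ref{thm:connected} for the current edge and finishes with a single undo on $\mathcal{T}_1$, restoring $P_1$. Building the $O(\log m)$ stencils costs $O(\log m) = O(m)$ extra snips, all on $\mathcal{T}_2$; each edge then costs $O(1)$ operations, so the second phase is $O(m)$ operations and the whole construction is $O(n+m)$.

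I expect the reconciliation of a depth-one undo with the many-lengths needle requirement to be the main obstacle. The tension is real: carving a needle of a prescribed length from the unit square naively uses a reset followed by at least one trimming snip, already two operations on the same tool, which puts $P_1$ out of reach of a single undo. Making the one-operation reshaping genuinely work --- whether via stencils kept in the discardable part of $\mathcal{T}_2$ as above, or via attaching a fixed family of needle tips to $P_1$ itself during the first phase and trimming them off at the very end with $\mathcal{T}_2$ as the cutter, or via some other device --- is the crux; everything else (the needle geometry, the $O(1)$-snips-per-edge guarantee, the $O(\log m)$ length classes, and the choice of the angle $\alpha$ and base length $h$) transfers directly from Theorem~\ref{thm:connected}.
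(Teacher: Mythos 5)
Your high-level architecture is the same as the paper's: carve $P_1$ first, then carve $P_2$ with needles derived from $\mathcal{T}_1$, arranging that each needle is obtained from $P_1$ by a \emph{single} operation so that one $1$-undo restores $P_1$. You also correctly isolate the crux. But the proposal leaves that crux unresolved: you offer a menu of candidate devices (stencils in discardable parts of $\mathcal{T}_2$, pre-attached needle tips, ``some other device'') without working any of them out, and essentially all of the technical content of the theorem lives exactly there. The paper's resolution is a single reusable gadget: carve one $90^\circ$ notch into the free space just above $e_2$, bounded by a subsegment $\ell$ of $e_2$ and an edge perpendicular to it; place that notch, slightly rotated, so its right-angled corner sits just below a centered subsegment $e$ of $e_1$, and snip. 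This \emph{disconnects} a thin triangle of $P_1$ along $e$, and in the connected model one keeps that component as the needle. The needle's length is tuned continuously by how high the notch is placed before snipping, so one gadget serves every edge of $P_2$; no per-class stencils are needed. (Also, the number of nonempty length classes in Theorem~\ref{thm:connected} is $O(m)$, not $O(\log m)$, since edge lengths can span arbitrarily many scales --- this does not break your count, but the figure you quote is wrong.)

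The second missing piece is the bootstrap, which your proposal does not mention at all: to carve the first notch above $e_2$ you already need a thin cutter, and you must obtain one without irrecoverably destroying $P_1$ on a depth-one undo stack. The paper does this by resetting $\mathcal{T}_1$ to the square (pushing $P_1$ onto its stack), using the square to cut a narrow rectangle out of a corner of $\mathcal{T}_2$, undoing $\mathcal{T}_1$ back to $P_1$, using the resulting perpendicular-edge structure on $\mathcal{T}_2$ to extract one needle from $P_1$ along $e_1$ in a single snip, carving the notch above $e_2$ with that needle, and undoing again. Your stencil variant would need the same kind of bootstrap, plus an argument that the stencils remain connected to $\mathcal{T}_2$ and do not interfere with carving $P_2$; as written, the proposal states the plan and names the obstacle rather than overcoming it.
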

\begin{proof}
Let $e_1$ be the longest edge of $P_1$ not on the boundary of the unit square  and $e_2$ be the longest edge of $P_2$ not on the boundary of the unit square.  Without loss of generality, we assume that $e_1$ is longer than $e_2$.
We apply Theorem~\ref{thm:connected} to cut $\mathcal{T}_1$ into $P_1$. To create $P_2$ we will use a needle to cut along edges as in Theorem~\ref{thm:connected}. Each needle will be cut along $e_1$ using a small construction along $e_2$. We will ensure the needle can have varying sizes, so we can cut along each edge in $O(1)$ cuts. We also guarantee that the needle can be created from $P_1$ in a single cut, so we can easily undo the operation.

\begin{figure}[ht]
\centering
\includegraphics{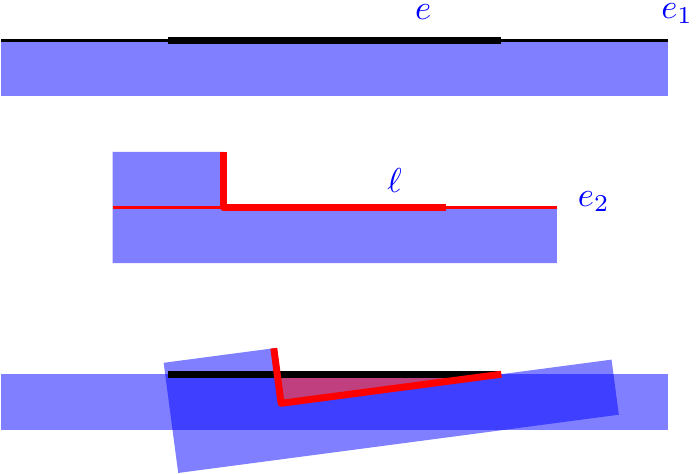}
\caption{We can use $e_1$, $e_2$ and a small added edge on $e_2$ to create a needle in $\mathcal{T}_1$ that can be used to create $P_2$ in $\mathcal{T}_2$. The needle is indicated in purple.}
\label{fig:1-undo-connected-needle}
\end{figure}

We first explain how to create the needle, as also illustrated in Figure~\ref{fig:1-undo-connected-needle}. We create the needle from a segment $e$ of $P_1$, which is a subsegment of $e_1$ that is half the length of $e_1$ but centered at its center. The cutting tool will consist of a subsegment $\ell$ of $e_2$ and an edge perpendicular to it, creating a $90^\circ$ angle in the freespace. The segment $\ell$ is also half the length of $e_2$ and centered at its center. This is to ensure that there is a constant size rectangle above and below $e$ and $\ell$ that does not contain edges or vertices of $P_1$ or $P_2$. Now to cut a needle from along $e$, assume that $e$ is horizontal with freespace above it and that the edge perpendicular to $\ell$ is on its left endpoint oriented upward. Now place the right endpoint of $\ell$ on the right endpoint of $e$ and rotate $\ell$ counterclockwise around the right endpoint by an arbitrarily small angle so that the right angle is in the interior of $P_1$, just below $e$. This cut will disconnect a needle from the rest of $P_1$ with a length proportional to $\ell$. By moving $\ell$ higher before cutting we can create shorter needles.

\begin{figure}
\centering
\includegraphics{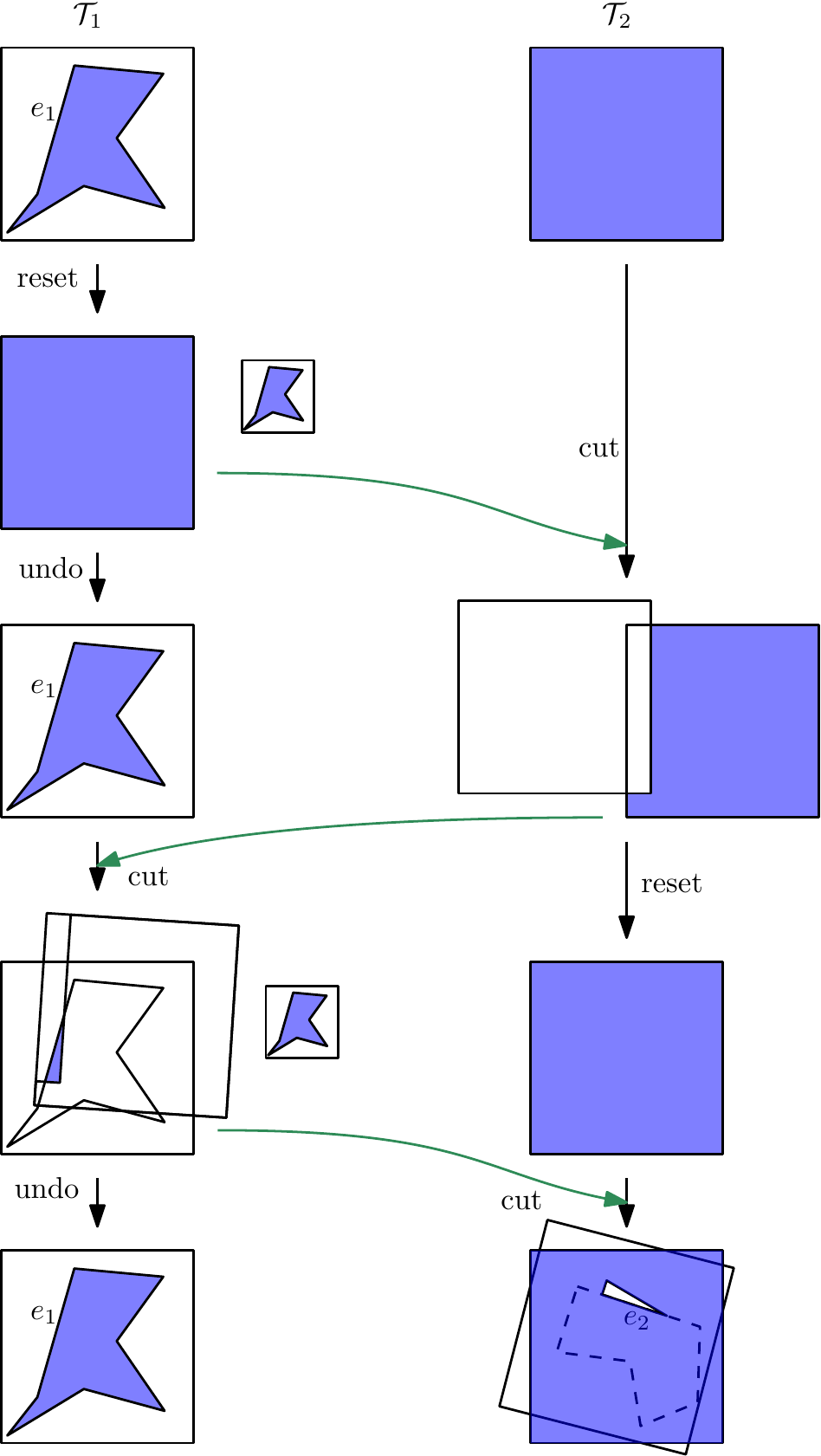}
\caption{Steps illustrating the creation of a freespace triangle above $e_2$ that can be used to created needles along $e_1$. Small squares indicate the shape that is on the undo stack (omitted when not used later).}
\label{fig:1-undo-connected-start}
\end{figure}

For this cutting process to work, the triangle created by $\ell$ and the perpendicular edge must be empty. So this will be the first piece we remove from $\mathcal{T}_2$ in the process of creating $P_2$. How to do this is illustrated in Figure~\ref{fig:1-undo-connected-start} and described next. We first reset $\mathcal{T}_1$ and cut a long narrow rectangle out of the top left corner of $\mathcal{T}_2$. This gives us a long vertical edge and a shorter horizontal edge perpendicular to it. We use this structure to create a narrow triangle along $e_1$ as described above. This needle is then aligned with $e_2$ and cuts out a narrow triangle above $e_2$ so that an edge perpendicular to $e_2$ is created that is sufficiently far from the endpoint of $e_2$.

The remainder of the process follows that of Theorem~\ref{thm:connected} where we use needles of a specific length to cut edges proportional to that length. The one exception is $e_2$, which is cut last. Note that unlike in Theorem~\ref{thm:connected}, the order in which we cut the edges is no longer relevant, since we can cut the needle to the size required for the current edge, cut that edge, and then return the needle to its original length using a 1-undo operation. This guarantees that the perpendicular edge required stays attached to the main shape and is removed only when no more needles need to be created.
\end{proof}

\subsection{Disconnected Model}
Finally, we focus our attention on the disconnected model with undo operations. We show that allowing undo operations reduces the upper bound on the number of operations required to cut one target shape out of one tool. In fact, we can cut any two target shapes out of the two tools, but the number of operations needed for this depends on the size of the undo stack.

\begin{theorem}
\label{thm:P_1_undo_1}
We can cut one of the tools into any target polygonal domain $P_1$ of $n$ vertices using $O(n)$ snip, reset and $1$-undo operations in the disconnected model.
\end{theorem}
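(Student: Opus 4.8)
The plan is to reuse the free-space triangulation of Lemma~\ref{lem:cut-L-shape} but to discard its $n$-phase structure. Recall that the phases were introduced solely to guarantee that a large solid reference region always survived inside $\mathcal{T}_1$ with which to carve a cutting tool out of $\mathcal{T}_2$; a single undo lets us recover such a region on demand. So I would fix a placement of $P_1$ inside $\mathcal{T}_1$, triangulate the free space $\mathcal{T}_1 \setminus P_1$ --- a polygonal domain of $O(n)$ total complexity, since its outer boundary is the unit square and $P_1$ contributes $n$ vertices --- into $O(n)$ triangles $t_1, \dots, t_k$, each contained in the unit square, and then remove them one at a time.

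To remove a triangle $t_i$ (for $i \geq 2$), when $\mathcal{T}_1$ currently equals the unit square minus $\overline{t_1}, \dots, \overline{t_{i-1}}$, I would: (1)~reset $\mathcal{T}_1$ to a fresh unit square; (2)~reset $\mathcal{T}_2$ and, using the unit-square $\mathcal{T}_1$ as a cutter, carve $\mathcal{T}_2$ into a copy of $t_i$ placed exactly where $t_i$ sits in $\mathcal{T}_1$; (3)~apply a $1$-undo to $\mathcal{T}_1$ --- since the operations of step~(2) changed only $\mathcal{T}_2$, the most recent operation that changed $\mathcal{T}_1$ is the reset of step~(1), so the undo restores $\mathcal{T}_1$ to the unit square minus $\overline{t_1}, \dots, \overline{t_{i-1}}$; (4)~snip $\mathcal{T}_1$ with the triangle-shaped $\mathcal{T}_2$, which removes exactly $\overline{t_i}$, because the $t_j$ are interior-disjoint so $t_i$ is still entirely present in $\mathcal{T}_1$, and $t_i$ is interior-disjoint from $P_1$ so nothing of $P_1$ is lost. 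For $i = 1$ no reset or undo of $\mathcal{T}_1$ is needed, since it is already a unit square. After all $k$ steps $\mathcal{T}_1$ equals the unit square minus the closure of the entire free space, i.e., $P_1$; the open/closed-set bookkeeping is identical to that in Lemma~\ref{lem:cut-L-shape}.

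Step~(2) is the only piece requiring a short calculation: carving a unit square into a prescribed triangle $t_i \subseteq [0,1]^2$ using a unit-square cutter takes $O(1)$ snips. For each of the at most three edges of $t_i$ not already lying on the boundary of $\mathcal{T}_2$, rotate the cutter so that one of its sides lies on the supporting line of that edge with the cutter body on the side to be discarded, and sweep it across in a grid pattern as in Lemma~\ref{lem:cut-L-shape}; since $\mathcal{T}_2$ stays within a unit square throughout, a constant number of placements per edge suffices, and after the three half-plane cuts $\mathcal{T}_2 = [0,1]^2 \cap t_i = t_i$. Hence each triangle costs $O(1)$ snip, reset and $1$-undo operations, for $O(n)$ in total. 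I expect the main obstacle to be conceptual rather than technical: one must verify that momentarily resetting $\mathcal{T}_1$ does not forfeit the carving done so far, and this is exactly what the $1$-undo provides --- which is why an undo stack of size~$1$ already suffices, whereas without it one is forced back into the $n$-phase argument and the $O(n^2)$ bound. The remaining technicalities (thin sliver triangles still admit a constant-size grid sweep, and every half-plane cut meets $\mathcal{T}_2$ in a set of positive area) are immediate.
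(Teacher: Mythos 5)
Your proposal is correct and follows essentially the same route as the paper's own proof: triangulate the free space $\mathcal{T}_1 \setminus P_1$ into $O(n)$ triangles, and for each one reset both tools, carve a congruent triangle in $\mathcal{T}_2$ with $O(1)$ snips using the unit-square $\mathcal{T}_1$ as cutter, then $1$-undo $\mathcal{T}_1$ (whose only intervening change was the reset) and snip away the triangle. The only difference is that you spell out the $O(1)$ half-plane/grid-sweep carving step in more detail than the paper does.
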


\begin{proof}
We first triangulate the free-space $\mathcal{T}_1 \backslash P_1$. Then, we remove each triangle $t$ by making a congruent triangle $t'$ in $\mathcal{T}_2$.
Each time we create a triangle $t'$ in $\mathcal{T}_2$ we first reset $\mathcal{T}_1$ and $\mathcal{T}_2$. Then, we can remove $\mathcal{T}_2 \backslash t'$ using $\mathcal{T}_1$ with a constant number of snips. Since we only apply one operation on $\mathcal{T}_1$, we can use an undo operation to restore $\mathcal{T}_1$ to its previous shape, which is  the partially constructed shape towards the target shape $P_1$. Next, we can cut the triangle $t$ in $\mathcal{T}_1$ using the congruent triangle $t'$ in $\mathcal{T}_2$. Thus, we use $O(1)$ snip, reset and 1-undo operations.
We apply this process for each triangle in the free-space. Hence, since the triangulation has linear complexity, we can remove the free-space with $O(n)$ operations in total.
\end{proof}

Next, we show that we can cut the two tools into any two target shapes using only snip, reset and 1-undo operations.

\begin{theorem}
We can cut two tools $\mathcal{T}_1$ and $\mathcal{T}_2$ into any two target polygonal domains $P_1$ and $P_2$ using a finite number of snip, reset and $1$-undo operations in the disconnected model.
\end{theorem}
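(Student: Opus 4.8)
The plan is to bootstrap from Theorem~\ref{thm:P_1_undo_1}, which already lets us carve one tool into any target using a linear number of operations, and then handle the second tool. The key insight is that once $\mathcal{T}_1$ has been turned into $P_1$, we no longer have a ``spare'' solid square to build cutting needles from --- but with a $1$-undo operation available, we can temporarily borrow $\mathcal{T}_1$: reset it to the unit square, build whatever auxiliary needle we need, use that needle to make one cut on $\mathcal{T}_2$, and then undo to restore $\mathcal{T}_1$ to $P_1$. The obstacle compared to the connected case is that each ``auxiliary needle'' construction may itself take several snips on $\mathcal{T}_1$ before it is usable, so a single $1$-undo cannot roll all of that back at once; we therefore need a construction in which $\mathcal{T}_1$ is touched by exactly one operation between consecutive undos.

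First I would recall the triangulation trick from Theorem~\ref{thm:P_1_undo_1}: triangulate the free-space $\mathcal{T}_2 \setminus P_2$ into $O(m)$ triangles, and remove them one at a time. To remove a triangle $t$ of $\mathcal{T}_2$, I need a congruent triangle $t'$ sitting alone as the shape of the other tool. The second tool cannot currently supply this (it is mid-construction toward $P_2$), so I use $\mathcal{T}_1$: reset $\mathcal{T}_1$ to the unit square, and then --- crucially in a single snip --- cut it down to the triangle $t'$. A single snip can only remove (the closure of) one overlap region, so a unit square cannot become an arbitrary triangle in one snip. The fix is to iterate: reset $\mathcal{T}_1$, make one snip to remove one halfplane's worth of material, undo $\mathcal{T}_2$? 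No --- instead, I would use $\mathcal{T}_2$ itself, or rather an intermediate auxiliary, to carve $\mathcal{T}_1$ into the triangle across several snips, but this spoils $\mathcal{T}_2$. The cleaner route, and the one I would take, is to not insist on undoing $\mathcal{T}_1$ after every elementary snip: instead carve $\mathcal{T}_2$ triangle-by-triangle, and between the processing of two consecutive triangles, rebuild from scratch whatever needs rebuilding. Since we are allowed a \emph{finite} (unbounded) number of operations, we can afford to, for each triangle $t$ of the free-space of $\mathcal{T}_2$: (1) reset both tools; (2) spend however many snips it takes to carve $\mathcal{T}_1$ into a triangle $t'$ congruent to $t$; (3) make one snip on $\mathcal{T}_2$ using $t'$ to remove $t$; but wait --- step (3) requires $\mathcal{T}_2$ to already be partially carved, which step (1)'s reset destroyed.

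So the genuinely correct order is the one used for the connected model, adapted: carve $P_1$ into $\mathcal{T}_1$ first (Theorem~\ref{thm:P_1_undo_1}, $O(n)$ operations, with the final shape $P_1$ on top of the undo stack if we are careful, or re-achievable). Then process the free-space triangles of $\mathcal{T}_2$ one by one, and for each triangle $t$: reset $\mathcal{T}_1$ to the square, carve it (using auxiliary help from $\mathcal{T}_2$'s current solid regions, or from a reset-then-rebuild loop that does not touch $\mathcal{T}_2$ destructively) into the congruent triangle $t'$, then make the single snip on $\mathcal{T}_2$ removing $t$, then $1$-undo on $\mathcal{T}_1$. The catch is that carving $\mathcal{T}_1$ from a square into $t'$ takes more than one operation, so the final $1$-undo only rolls back the last of those. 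The resolution I expect the paper uses --- and which I would adopt --- is to note that $\mathcal{T}_2$ always retains a solid region (e.g.\ since $P_2$ is a nonempty polygonal domain, or we process triangles so that a large solid block of $\mathcal{T}_2$ survives until the end), so we can carve $t'$ \emph{out of $\mathcal{T}_2$} with $O(1)$ snips --- but those snips act on $\mathcal{T}_2$, so we must undo them, and a $1$-undo is not enough for $O(1) > 1$ snips. Hence the true argument keeps $\mathcal{T}_1$ as the scratch tool: reset $\mathcal{T}_1$; make \emph{one} snip on $\mathcal{T}_1$ using $\mathcal{T}_2$ (whose solid block serves as a blade) to cut away one side of $t'$; then $1$-undo $\mathcal{T}_1$ back? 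No. The honest summary: with only a $1$-undo and a single scratch opportunity per cut on $\mathcal{T}_2$, one must ensure each blade on $\mathcal{T}_2$ is built using a single operation whose undo is cheap, which forces a reset-heavy, unbounded-length but finite procedure --- exactly matching the ``Yes'' entry (possible, but unbounded) in Table~\ref{tb:undo}.

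The main obstacle, then, is the bookkeeping around the undo stack of size $1$: every time we want to use $\mathcal{T}_1$ (already equal to $P_1$) as temporary material, we may only perform operations on it that can be wholly reversed by a single undo, i.e.\ exactly one shape-changing operation. I would handle this by the following scheme. Maintain $\mathcal{T}_1 = P_1$ as an invariant between ``rounds.'' In each round, to remove one free-space triangle $t$ from $\mathcal{T}_2$: (a) perform a reset on $\mathcal{T}_1$ (one operation on $\mathcal{T}_1$); (b) using repeated snips with the still-solid portion of $\mathcal{T}_2$ as a blade, carve $\mathcal{T}_1$ into $t'$ --- but since reset already sits on the undo stack, and every subsequent snip on $\mathcal{T}_1$ pushes down, we cannot undo back to $P_1$ in one step. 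Therefore I instead do (b') on $\mathcal{T}_2$: reset $\mathcal{T}_2$, carve a blade, but this loses progress on $\mathcal{T}_2$. The only consistent resolution with $1$-undo is to carve \emph{all} of $P_2$ fresh each ``round'' is wasteful but finite; more efficiently, observe we can alternate: use the $1$-undo to protect whichever tool currently holds accumulated progress, and do all scratch work on the other via reset. Concretely: keep $P_1$ in $\mathcal{T}_1$; to cut triangle $t$ from the in-progress $\mathcal{T}_2$, first note the only operation we will apply to $\mathcal{T}_1$ is a \emph{reset} if we can build $t'$ on $\mathcal{T}_2$ in zero further $\mathcal{T}_1$-operations --- impossible. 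I therefore expect the actual proof simply builds $t'$ inside $\mathcal{T}_1$ over several snips, then cuts $t$ from $\mathcal{T}_2$ with one snip, then resets $\mathcal{T}_1$ and re-runs Theorem~\ref{thm:P_1_undo_1} from scratch to restore $P_1$ --- giving an unbounded but finite operation count, consistent with the theorem statement which only claims finiteness. I would write it that way: the $1$-undo is used only to recover, within each triangle-removal round, the single most recent $\mathcal{T}_2$-operation when needed, while full re-carving of $P_1$ via Theorem~\ref{thm:P_1_undo_1} handles the restoration of $\mathcal{T}_1$, and symmetric re-carving restores $\mathcal{T}_2$'s progress; termination is guaranteed because the free-space of $\mathcal{T}_2$ has finitely many triangles. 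The delicate point to get right is ensuring that while scratch-carving one tool we never destroy the \emph{other} target permanently without a recorded way (re-carve, or undo) to rebuild it, and that at the very end both tools simultaneously equal $P_1$ and $P_2$ --- which we arrange by making the last round's final two operations be: the single snip completing $P_2$ in $\mathcal{T}_2$, preceded by restoring $\mathcal{T}_1 = P_1$, so no further operation on $\mathcal{T}_1$ is needed afterward.
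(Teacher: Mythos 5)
You correctly identify the central difficulty---a $1$-undo can roll back only one shape-changing operation on $\mathcal{T}_1$, so any multi-snip scratch work on $\mathcal{T}_1$ cannot be reversed by undo alone---but the resolution you finally commit to does not work, and the paper's actual resolution is missing from your proposal. Your final scheme restores $\mathcal{T}_1$ to $P_1$ by ``resetting $\mathcal{T}_1$ and re-running Theorem~\ref{thm:P_1_undo_1} from scratch.'' That procedure is not free for the other tool: it repeatedly resets $\mathcal{T}_2$ and carves triangles out of it to use as blades, so each such restoration of $P_1$ destroys all accumulated progress toward $P_2$. Your appeal to ``symmetric re-carving'' of $\mathcal{T}_2$ then reopens the same problem on the other side: rebuilding $\mathcal{T}_2$'s progress consumes $\mathcal{T}_1$ as scratch material. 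The two restorations are mutually destructive, and the ordering you propose for the last round (``restore $\mathcal{T}_1=P_1$, then make the single snip completing $P_2$'') does not break the cycle, because after the restoration of $\mathcal{T}_1$ the tool $\mathcal{T}_2$ is no longer one snip away from $P_2$. Termination ``because there are finitely many triangles'' does not address this; the issue is not the number of rounds but that no round can end with both tools simultaneously correct.

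The paper's key idea, which you did not find, is to manufacture from $P_1$ a single \emph{universal} cutting tool---a needle whose apex angle, base, and height are bounded by parameters $\alpha/2$, $d/2$, $h/2$ determined by $P_2$---in exactly \emph{one} snip on $\mathcal{T}_1$, so that exactly one $1$-undo restores $P_1$ at the very end. This requires a delicate preparation phase: $\mathcal{T}_2$ is reset to a square and positioned so that one snip reduces $P_1$ to a thin sliver set $T$ near its leftmost vertex; a rotated copy $R_\epsilon(T)$ of that sliver is then cut out of $\mathcal{T}_2$; an undo restores $\mathcal{T}_1$ to $P_1$; and a final single snip of $P_1$ by this almost-covering $\mathcal{T}_2$ leaves precisely the needle $P_1\cap R_\epsilon(T)$. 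After that, $\mathcal{T}_2$ is reset and the needle carves out all of $\mathcal{T}_2\setminus P_2$ using finitely many snips (finite because the free-space is compact and the needle has positive area), with no further operation on $\mathcal{T}_1$ until the closing undo. The ``finite but unbounded'' count in the theorem comes from the needle phase, not from repeated full re-carvings. Without some device of this kind---a blade obtainable from $P_1$ in one reversible operation---your construction cannot terminate with both targets realized.
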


\begin{proof}
We apply Theorem~\ref{thm:P_1_undo_1} to cut $\mathcal{T}_1$ into $P_1$. Then, the idea is that we can shape $P_1$ into a very narrow triangle, {\em a needle}, by using one snip  operation, and use the needle to cut all the free-space $\mathcal{T}_2 \backslash P_2$. After we get $P_2$, we can perform a 1-undo operation to restore $\mathcal{T}_1$ to $P_1$.

Let $\alpha$ be the smallest angle between any two adjacent edges of $P_2$, $d$ be the length of the shortest edge of $P_2$, and $h$ be the shortest distance between any vertex and a non-adjacent edge of $P_2$. These values will define how small the needle we create needs to be. Let $\ell_1$ be the vertical line touching the leftmost vertex of $P_1$. Since there may be multiple such vertices, let $p$ be the bottommost vertex of $P_1$ on $\ell_1$. Let $\ell_2$ be the vertical line touching the first vertex on the right side of $\ell_1$ in $P_1$.
We first reset $\mathcal{T}_2$ to a unit square. We align the left edge  $L_2$ of $\mathcal{T}_2$ with $\ell_1$ such that $\mathcal{T}_2$ fully covers $P_1$. Then, we shift $\mathcal{T}_2$ a little bit to the right such that $L_2$ is between $\ell_1$ and the bisector of $\ell_1$ and $\ell_2$, and the length of the bottommost edge of $P_1$ between $\ell_1$ and $L_2$ is less than $d/2$. We cut $P_1$ with $\mathcal{T}_2$ so that we have a set $T$ of triangles (or quadrangles) left in $\mathcal{T}_1$ (see Figure~\ref{fig:1-undo}).

\begin{figure}[ht]
\centering
\scalebox{1}{\includegraphics[page=1]{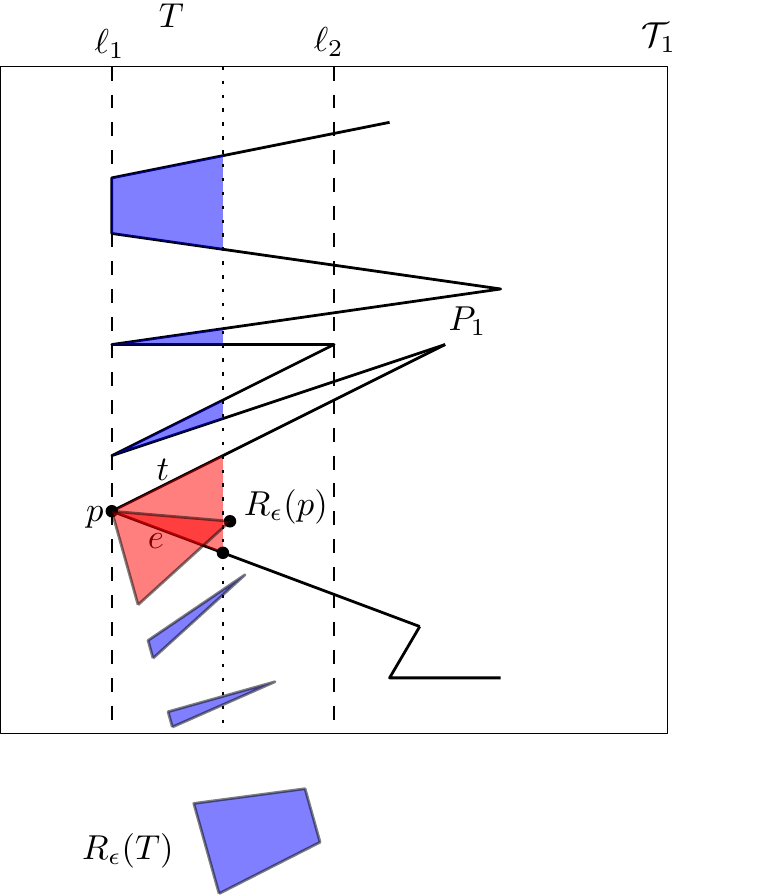}}
\caption{The figure shows the set $T$ of the triangles and quadrangles (with filled colors) after cutting $P_1$ with the unit square $\mathcal{T}_2$ and the set $R_{\epsilon}(T)$ obtained by rotating $T$ $180^{\circ}$ around the midpoint of $e$ and then rotating a small angle $\epsilon$ counterclockwise around $p$.}
\label{fig:1-undo}
\end{figure}

\begin{figure}[ht]
\centering
\scalebox{1}{\includegraphics[page=2]{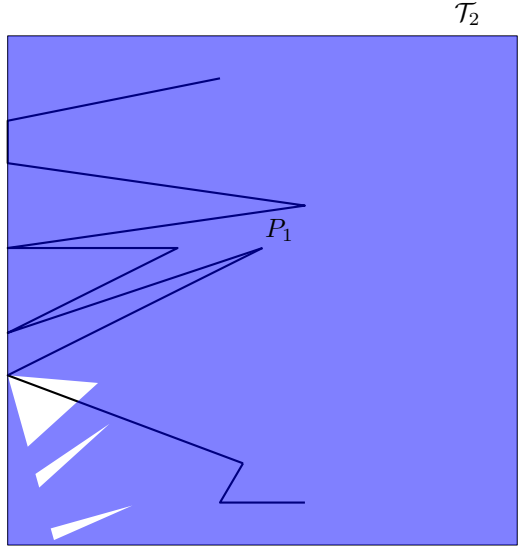}}
\caption{The figure shows $\mathcal{T}_2$ after removing $R_{\epsilon}(T)$ and the part of the boundary of $P_1$.}
\label{fig:1-undo_2}
\end{figure}

Let $e$ be the bottommost edge of $T$ and let $t$ be the bottommost object of $T$. Let $R$ be the function that rotates the input by $180^{\circ}$ around the midpoint of $e$, i.e., $R(T)$ is the set of triangles (or quadrangles) obtained by rotating $T$ $180^{\circ}$ around the midpoint of $e$, and $R(t)$ is the triangle obtained by rotating $t$ in the same manner. Notice that the intersection of $R(T)$ and $T$ is only $e$.
Let $R_{\epsilon}$ be the function that rotates the input by $180^{\circ}$ around the midpoint of $e$ and then rotates it by a small angle $\epsilon$ counterclockwise around $p$ of $T$.
We pick a small $\epsilon < \alpha/2$ such that no triangle in $R_{\epsilon}(T)$ crosses $\ell_2$, only $R_{\epsilon}(t)$ intersects with $t$, and the distance between $R_{\epsilon}(p)$ and $e$ is less than $h/2$.
We shift $\mathcal{T}_2$ back to the left such that $L_2$ is on $\ell_1$. Then, we perform the rotation $R_{\epsilon}$ on $T$ and cut $\mathcal{T}_2$ with $R_{\epsilon}(T)$.
After this cut, we perform an undo operation to restore $\mathcal{T}_1$ to $P_1$ and rotate $P_1$  back to its starting orientation. Finally, we cut $P_1$ with $\mathcal{T}_2$ to obtain the needle (see Figure~\ref{fig:1-undo_2}).

We argue why the final cut indeed leaves only the needle. Since $\mathcal{T}_2$ almost covers $P_1$ except for the missing part $R_{\epsilon}(T)$,
it is essential to show that the intersection of $R_{\epsilon}(T)$ and $P_1$ is the needle. Since $R(T)$ lies between $\ell_1$ and the bisector of $\ell_1$ and $\ell_2$, there exists a small $\epsilon$ such that $R_{\epsilon}(T)$ lies between $\ell_1$ and $\ell_2$. In addition, $e$ is the bottommost edge of $T$, so there cannot be any intersection of $R_{\epsilon}(T)$ and $P_1$ below $e$.  The intersection of $P_1$ and $R(T)$ is $e$ and all the triangles in $R(T)$ are below $e$, so we can rotate them by a small angle $\epsilon$ around $p$ so that only one vertex $R_{\epsilon}(p)$ in $R_{\epsilon}(T)$ lies above $e$ (see Figure~\ref{fig:1-undo}). As one of the endpoints of the edge of $P_1$ that contains $e$ lies on or to the right side of $\ell_2$, the intersection of $P_1$ and $R_{\epsilon}(t)$ is a triangle.
In particular, the intersection of $R_{\epsilon}(T)$ and $P_1$ is a narrow triangle with a base length of at most $d/2$, height of at most $h/2$ and a small angle of at most $\alpha/2$.

After we obtain the needle, we reset $\mathcal{T}_2$ and use the needle to cut the free-space $\mathcal{T}_2 \backslash P_2$ in a finite number of snip operations, because the free-space is a compact object. Finally, we perform an undo operation to restore the needle to $P_1$, resulting in the two target polygonal domains.
\end{proof}

Finally, we show that if we are allowed to use a 2-undo operation instead of a 1-undo, the number of required operations reduces to linear in the complexity of the two target polygonal domains.

\begin{theorem}
We can cut two tools $\mathcal{T}_1$ and $\mathcal{T}_2$ into any two target polygonal domains $P_1$ and $P_2$ using $O(n+m)$ snip, reset and $2$-undo operations in the disconnected model.
\end{theorem}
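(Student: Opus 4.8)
The plan is to build $P_1$ first, park it, and then carve $P_2$ one triangle at a time, exploiting the fact that a depth-$2$ undo stack lets us reset a tool to a unit square, do one shaping snip, and then restore the tool to its previous shape. Concretely: first apply Theorem~\ref{thm:P_1_undo_1} to turn $\mathcal{T}_1$ into $P_1$ using $O(n)$ snip, reset and $1$-undo operations (a $2$-undo stack subsumes $1$-undo), and reset $\mathcal{T}_2$ to a unit square. Triangulate the free space of $P_2$ inside that unit square and refine the triangulation so that it consists of $O(m)$ \emph{right} triangles, each of diameter at most $1/2$ (split each triangle by the altitude from its largest angle, then subdivide); then every such triangle has both legs at most $1$ and hypotenuse at most $1$.

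The heart of the argument is a constant-size routine that removes one free-space right triangle $t$ of $\mathcal{T}_2$ while leaving $\mathcal{T}_1$ equal to $P_1$ afterward. For each such $t$ I would: (i)~reset $\mathcal{T}_1$ to a unit square $U$; (ii)~reset $\mathcal{T}_2$ to a unit square $U$; (iii)~with one snip, place $\mathcal{T}_1$ so that $\mathcal{T}_1\cap\mathcal{T}_2$ is exactly a corner right triangle $t'\cong t$ of $\mathcal{T}_2$, which cuts that triangular notch out of $\mathcal{T}_2$; (iv)~with one snip, align $\mathcal{T}_1$ with the unit square underlying the now-notched $\mathcal{T}_2$, so that the snip deletes $U\setminus t'$ from $\mathcal{T}_1$ and leaves $\mathcal{T}_1=t'$; (v)~apply a $2$-undo to $\mathcal{T}_2$, reversing steps (ii) and (iii) and restoring $\mathcal{T}_2$ to the partially carved copy of $P_2$; (vi)~place the gadget $\mathcal{T}_1=t'$ exactly onto $t$ and snip, removing precisely $t$ from $\mathcal{T}_2$; (vii)~apply a $2$-undo to $\mathcal{T}_1$, reversing steps (i) and (iv) and restoring $\mathcal{T}_1=P_1$. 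Each triangle costs $O(1)$ operations, so carving $P_2$ costs $O(m)$, and together with the $O(n)$ operations for $P_1$ the whole construction uses $O(n+m)$ operations, finishing with $\mathcal{T}_1=P_1$ and $\mathcal{T}_2=P_2$.

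The main obstacle is the choreography above: we never have a third register, so every reset of a tool whose shape we care about must be undone again within two changes, and this is exactly what dictates the design. It is why the triangulation must produce right triangles (a gadget congruent to $t$ must be obtainable as the overlap of two unit squares, since the intersection of two axis-aligned squares can only be a right triangle, and it must then be transferred to $\mathcal{T}_1$ by a single snip), and why we can afford at most one reset plus one shaping snip on each tool per triangle. What remains is routine verification: that the required overlaps are realizable (a unit square can cut off a corner right triangle with legs $\le 1$, hypotenuse $\le 1$, of another unit square, and the aligned overlap in step (iv) leaves exactly $t'$ as an open set after removing closures), and that the undo indices line up from round to round on each of $\mathcal{T}_1$ and $\mathcal{T}_2$ --- including the harmless first round, where $\mathcal{T}_2$ is already a unit square and step (v) uses a $1$-undo in place of the $2$-undo.
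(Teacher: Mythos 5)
Your proposal is correct and follows essentially the same route as the paper: build $P_1$ via Theorem~\ref{thm:P_1_undo_1}, cover the free space of $P_2$ by $O(m)$ small right triangles (right angles being forced because a triangular intersection of two squares must reuse two adjacent square edges), and for each triangle run an $O(1)$ choreography of resets, snips, and $2$-undos that manufactures a congruent gadget in $\mathcal{T}_1$ and then restores both tools. The only difference is a small optimization in the gadget construction: you notch $\mathcal{T}_2$ first with a single snip whose intersection is exactly the corner triangle, avoiding the paper's intermediate ``garbage components'' in $\mathcal{T}_1$ and its extra $1$-undo.
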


\begin{proof}
We apply Theorem~\ref{thm:P_1_undo_1} to cut $\mathcal{T}_1$ into $P_1$. Then, we define a cover of the free-space $\mathcal{T}_2 \backslash P_2$ with only small right triangles. We remove each right triangle $t$ by making a congruent triangle $t'$ in $\mathcal{T}_1$ by performing at most two operations on $P_1$, so we can get the target shape $P_2$ and restore $\mathcal{T}_1$ to $P_1$.

We first explain how to define the cover of the free-space with only right triangles. We start with any triangulation on the free-space $\mathcal{T}_2 \backslash P_2$. Then, we subdivide each triangle into a constant number of smaller triangles such that each smaller triangle fits in a $\frac{1}{2} \times \frac{1}{2}$ square. For each triangle, we draw a line segment from the vertex of the largest angle perpendicular to its opposite edge in order to split the triangle into two right triangles. Hence, there are $O(m)$ right triangles in the cover.

Next, we describe how to create the cutting tool in $\mathcal{T}_1$ (see Figure~\ref{fig:2-undo}). For each right triangle $t$ in the free-space, we first reset both $\mathcal{T}_1$ and $\mathcal{T}_2$ ($P_1$ and the partially constructed $P_2$ are stored at the top of their stacks).  We use the unit square $\mathcal{T}_2$ to cut the unit square $\mathcal{T}_1$ to get a triangle $t'$ congruent to $t$ at a corner of $\mathcal{T}_1$ ($P_1$ is stored at the second element of its stack). Note that there are other garbage components left in $\mathcal{T}_1$. Then, we translate $\mathcal{T}_1$ in such a way that only $t'$ overlaps $\mathcal{T}_2$, and cut $\mathcal{T}_2$ to make a square with a triangular hole (the partially constructed $P_2$ is at the second element of its stack). We perform an undo operation to restore $\mathcal{T}_1$ back to the unit square. The next step is to align the bounding unit square of $\mathcal{T}_1$ and $\mathcal{T}_2$, and cut $\mathcal{T}_1$ with $\mathcal{T}_2$ so that we get only $t'$ in $\mathcal{T}_1$. After we get the cutting tool $t'$, we perform two undo operations to restore $\mathcal{T}_2$ to the partially constructed $P_2$, and use $t'$ to remove $t$ from the free-space. Finally, we perform two undo operations to restore $\mathcal{T}_1$ to $P_1$. Overall, we use $O(1)$ snip, reset and undo operations to make some progress on $\mathcal{T}_2$ towards $P_2$ while maintaining $P_1$.

\begin{figure}
\centering
\scalebox{0.8}{\includegraphics{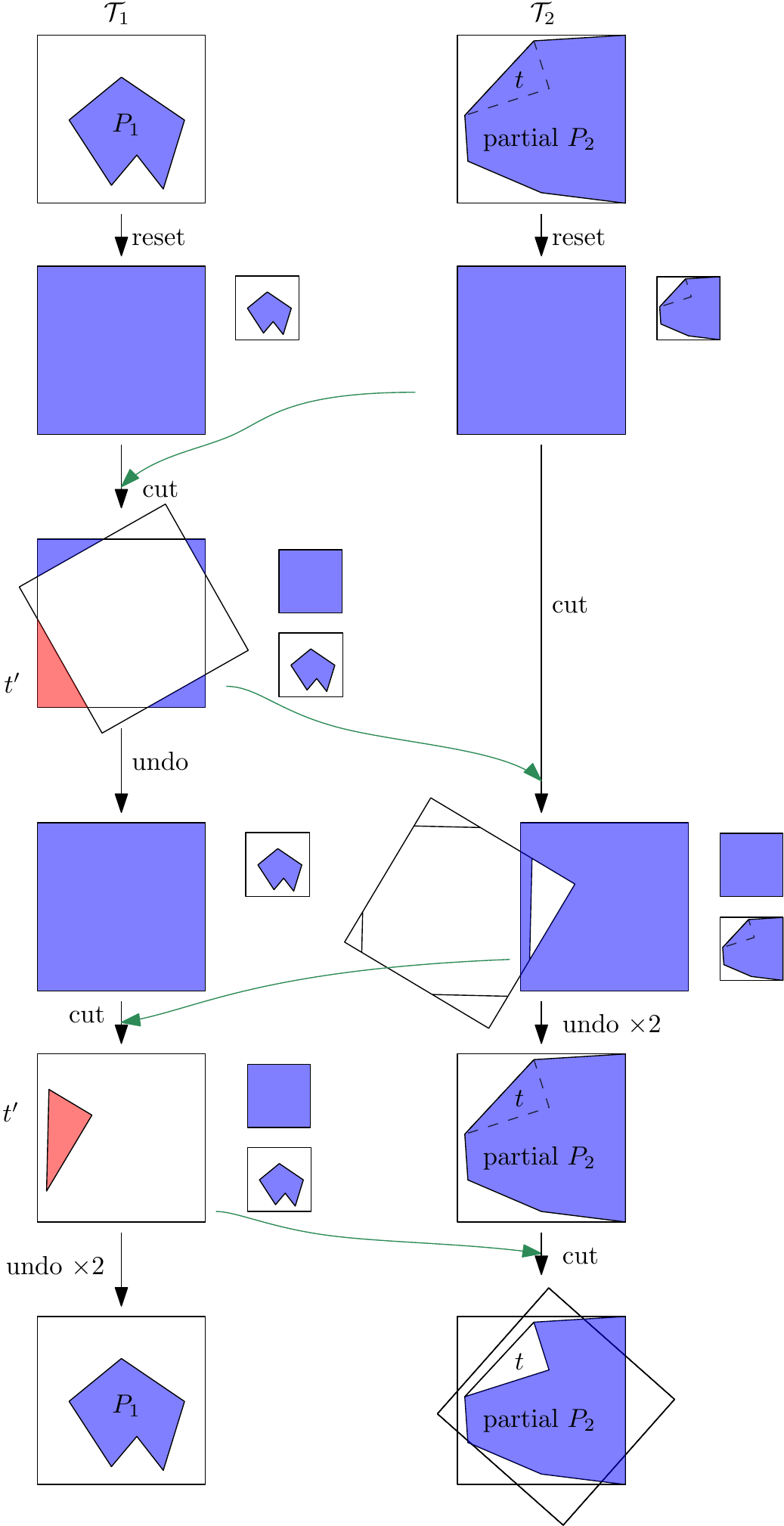}}
\caption{The figure shows how to remove a triangle $t$ in the partially constructed $P_2$ of $\mathcal{T}_2$ while maintaining $P_1$. Smaller squares indicate which shapes are on the undo stack.}
\label{fig:2-undo}
\end{figure}

We repeat the above process for each right triangle in the free-space, so we use $O(m)$ operations to carve out $P_2$. Including the $O(n)$ operations to carve out $P_1$, we use $O(n+m)$ operations in total.
\end{proof}

\section{Open Problems}
The natural open problem is to close the gap between our algorithms and the lower bound. Specifically, we are interested in a method that could extend our lower bound approach to the case in which you have the undo operation. We believe that without the undo operation there must exist a shape in the disconnected model that needs $\omega(n)$ operations to carve.

Our algorithms focus on worst-case bounds, but we also find the minimization problem interesting. Specifically, can we design an algorithm that cuts one (or two) target shapes with the fewest possible cuts? Is this problem NP-hard? If so, can we design an approximation algorithm?
Although it is not always possible to cut two tools simultaneously into the desired polygonal
shapes, it would be interesting to characterize when this is possible. Is the decision problem NP-hard?

It would also be interesting to consider the initial shape implemented in
the Snipperclips game (instead of the unit squares we used for simplicity),
namely, a unit square adjoined with half a unit-diameter disk.
This initial shape opens up the possibility of making curved target shapes
bounded by line segments and circular arcs of matching curvature.
Can all such shapes be made, and if so, by how many cuts? 

The stack size has a big impact in the capabilities of what we can do and on how fast can we do it. Additional tools can have a similar effect, since they can be used to {\em store} previous shapes. It would be interesting to explore if additional tools have the same impact as the undo operation or they actually allow more shapes to be constructed faster.

\section*{Acknowledgments}
This work was initiated at the 32nd Bellairs Winter Workshop on
Computational Geometry held January 2017 in Holetown, Barbados.
We thank the other participants of that workshop for providing a fun
and stimulating research environment.
We also thank Jason Ku for helpful discussions about (and games of)
Snipperclips.

\small
\bibliography{snipper}
\bibliographystyle{abbrv}

\end{document}